\documentclass{article}

\usepackage{arxiv}

\usepackage[utf8]{inputenc} 
\usepackage[T1]{fontenc}    
\usepackage{hyperref}       
\usepackage{url}            
\usepackage{booktabs}       
\usepackage{amsfonts}       
\usepackage{nicefrac}       
\usepackage{microtype}      
\usepackage{lipsum}		
\usepackage{graphicx}
\usepackage[round]{natbib}
\usepackage{doi}

\title{Constrained Markov decision processes for response-adaptive procedures in clinical trials with binary outcomes}


\author{Stef Baas \\
	Stochastic Operations Research Group,
	University~of~Twente,
	7522 NH Enschede, The~Netherlands\\
		\And 
	Aleida Braaksma \\
	Stochastic Operations Research Group,
	University~of~Twente,
	7522 NH Enschede, The~Netherlands\thanks{Corresponding author}
	\And
	Richard J. Boucherie\\
	Stochastic Operations Research Group,
	University~of~Twente,
	7522 NH Enschede, The~Netherlands\\
}



\hypersetup{
pdftitle={A template for the arxiv style},
pdfsubject={q-bio.NC, q-bio.QM},
pdfauthor={David S.~Hippocampus, Elias D.~Striatum},
pdfkeywords={First keyword, Second keyword, More},
}
\usepackage{bm}
\def\mydefb#1{\expandafter\def\csname b#1\endcsname{\bm{#1}}}
\def\mydefallb#1{\ifx#1\mydefallb\else\mydefb#1\expandafter\mydefallb\fi}
\mydefallb ABCDEFGHIJKLMNOPQRSTUVWXYZabcdeghijklnopqrstuyz\mydefallb

\def\mydefgreek#1{\expandafter\def\csname b#1\endcsname{\text{\boldmath$\mathbf{\csname #1\endcsname}$}}}
\def\mydefallgreek#1{\ifx\mydefallgreek#1\else\mydefgreek{#1}%
	\lowercase{\mydefgreek{#1}}\expandafter\mydefallgreek\fi}
\mydefallgreek {beta}{Gamma}{alpha}{Delta}{epsilon}{Omega}{Theta}{Iota}{Lambda}{kappa}{mu}{nu}{Xi}{Pi}{chi}{rho}{Sigma}{Psi}{tau}\mydefallgreek
\newcommand{\bv}{\bm v}
\newcommand{\bw}{\bm w}
\newcommand{\bx}{\bm x}

\newcommand{\Tau}{\mathrm{T}}
\usepackage{lipsum}

\let\oldcite\cite
\renewcommand{\cite}[2][]{\mbox{\oldcite[#1]{#2}}}
\let\oldcitep\citep
\renewcommand{\citep}[2][]{\mbox{\oldcitep[#1]{#2}}}
\let\oldcitet\citet
\renewcommand{\citet}[2][]{\mbox{\oldcitet[#1]{#2}}}
\newcommand{\bff}{\bm f}

\newcommand{\C}{\text{\normalfont C}}
\newcommand{\D}{\text{\normalfont D}}

\usepackage{graphicx}%
\usepackage{multirow}%
\usepackage{amsmath,amssymb,amsfonts}%
\usepackage{amsthm}%
\usepackage{mathrsfs}%
\usepackage[title]{appendix}%
\usepackage{xcolor}%
\usepackage{textcomp}%
\usepackage{manyfoot}%
\usepackage{booktabs}%
\usepackage{algorithm}%
\usepackage{algorithmicx}%
\usepackage{algpseudocode}%
\usepackage{listings}%
\algnewcommand{\Inputs}[1]{%
	\State \textbf{Inputs:}
	\Statex \hspace*{\algorithmicindent}\parbox[t]{.8\linewidth}{\raggedright #1}
}
\algnewcommand{\Initialize}[1]{%
	\State \textbf{Initialize:}
	\Statex \hspace*{\algorithmicindent}\parbox[t]{.8\linewidth}{\raggedright #1}
}
\usepackage{placeins}
\usepackage{accents}
\newcommand{\ubar}[1]{\underaccent{\bar}{#1}}
\usepackage{float}

\usepackage{orcidlink}
\usepackage{hyperref}                                           

\hypersetup{
bookmarksnumbered,
pdfstartview={FitH},
linkcolor=blue,
citecolor=blue,
colorlinks=true,
pdfpagemode={UseNone},
plainpages=false
}%

\newtheorem{theorem}{Theorem}
\newtheorem{lemma}[theorem]{Lemma}

\newtheorem{example}[theorem]{Example}
\setlength{\parindent}{1.5em}

\begin{document}
\maketitle
\begin{abstract} A \emph{constrained Markov decision process}~(CMDP) approach is developed for response-adaptive procedures in clinical trials with binary outcomes. The resulting CMDP class of Bayesian response-adaptive procedures can be used to target a certain objective, e.g., patient~benefit or power while using constraints to keep other operating characteristics under control. In the CMDP approach, the constraints can be formulated under different priors, which can induce a certain behaviour of the policy under a given statistical hypothesis, or given that the parameters lie in a specific part of the parameter space. A solution method is developed to find the optimal policy, as well as a more efficient method, based on backward recursion, which often yields a near-optimal solution with an available optimality gap. Three applications are considered, involving type I error and power constraints, constraints on the mean squared error, and a constraint on prior robustness. While the CMDP approach slightly outperforms the \emph{constrained randomized dynamic programming}~(CRDP) procedure known from literature when focussing on type~I and II error and mean squared error, showing the general quality of CRDP, CMDP significantly outperforms CRDP  when the focus is on type I and II error only.
\end{abstract}

\keywords{Bayesian optimisation, Power constraints, Type I error control, Mean squared error control, Prior misspecification control, adaptive treatment allocation}



\maketitle
\newpage
\section{Introduction}
 The current gold standard for assessing the efficacy of an experimental treatment is the \emph{randomized controlled trial}~(RCT)~\citep{bhatt2010evolution}. Historically,~participants enrolled in an RCT are randomized to each treatment, where the probability of allocation is independent of the history of outcomes and allocations in the trial (the trial history). This approach induces a high quality of statistical \emph{operating characteristics} (OCs) such as the variance of the treatment effect estimator or power of statistical tests, making it possible to make statements on how the experimental treatment will perform for a new patient, not considered in the trial (extrapolation), which is important in settings with large patient populations~\citep{Palmer219}. Furthermore, independence of the trial history and allocations ensures comparability of treatment groups and robustness to time trends. 
A downside of this approach is that a large part of the patients included in the trial might obtain inferior treatment. For a two-arm trial with equal allocation probabilities, this could concern half the patients, while for a multi-arm trial, this percentage can be even higher. The effect of this depends on the severity and nature of the illness. For instance, the effect is greater when the illness is very severe. Also, for rare diseases, a large part of the patients having the disease will be included in the trial, and in this setting extrapolation of trial results might be considered less important. In such cases, it could be desirable to make a different trade-off between individual ethics and collective ethics~\citep{Heilig2005critical}. 

An alternative, aiming to make such a trade-off, is a \emph{response-adaptive}~(RA) procedure~(see, e.g., \citet{rosenberger1996directions}, \citet{antognini2015adaptive}, or \citet{Villar2022}).
The general idea behind an RA procedure, introduced in~\citet{thompson1933likelihood}, is to sequentially base the allocation of the next arriving participant to a treatment on the current trial history, aiming to reach a certain objective, e.g., high patient~benefit or good statistical OCs. 
RA~procedures have been proposed where the allocation probability is strictly between zero and one for all trial histories, or where the allocation is deterministic for at least one trial history. An RA~procedure in the former class is termed a \emph{response-adaptive randomization}~(RAR) procedure, while an RA~procedure in the latter class is termed a \emph{deterministic response-adaptive}~(DRA) procedure. A clinical trial using an RA procedure will from now on be termed an RA design, while a design where treatment and outcomes are independent will be termed a \emph{non-adaptive}~(NA) design.
In recent years, RA procedures have found a growing number of applications in clinical trials~\citep{viele_berry_comment}, where most of these applications involved RAR procedures.
The US FDA has encouraged to consider multiple possibilities for statistical trial design, including RAR procedures in~\citet{FDA2010guidance}, while this guidance document also highlights the controversies of RA procedures (for an overview and comment, see~\citet{robertson2023response}).

The current paper focuses on optimisation-based \emph{Bayesian RA} (BRA) procedures.  Optimisation-based BRA procedures focus on the exploration-exploitation trade-off, which, in a clinical trial, corresponds to the aim of learning which treatment is best while simultaneously allocating the highest amount of patients to the best treatment. 
For a recent literature review including optimisation-based BRA approaches see \citet[Chapter 2]{williamson2020bayesian}.
In the literature, BRA procedures can mainly be classified into three categories: 
\begin{itemize}
    \item{\bf Index-based approaches}\\
    For an index-based BRA procedure, an index value is independently determined based on the data collected for each treatment group, and the treatment with the highest index value has the highest probability of being allocated to the next patient by the RA procedure.  Examples are DRA approaches such as the Gittins index procedure~\citep{Gittins1979bandit}, Bayes-UCB procedure~\citep{kauffmannucb}, and RAR procedures such as semi-randomized index-based approaches~\citep{bather1981Randomized}.\\
\item{\bf Thompson sampling (and modifications)}\\
    For Thompson sampling (introduced in~\citet{thompson1933likelihood}) the treatment with the highest posterior probability of having the highest expected outcome has the highest probability of being allocated to the next patient. A modification of Thompson sampling has been proposed in~\citet{THALL2007859} which induces a smaller variance in the allocation probabilities. \\
       \item{\bf Markov decision process (MDP) approaches}\\
 MDP approaches, introduced in~\citet{Bradt1956}, allocate treatment to trial~participants with the aim to maximize the patient~benefit up to a fixed horizon, given a prior distribution on the parameters for the model.  The MDP  approach will be the main focus of the current paper. 
\end{itemize}
Out of the three optimisation-based BRA procedures, the MDP approach is the most computationally intensive, as shown in~\citet{villar2015MAB}.   
As (modifications of) Thompson sampling can directly be applied in blocked randomized designs, the second approach is currently the most common approach when implementing RA procedures in practice~\citep{viele_berry_comment}, while the other two approaches have the potential to show the highest patient~benefit (see, e.g., \citet{villar2015MAB} or \citet{williamson2017bayesian})

The MDP approach is a natural optimisation method for the setting of a clinical trial, as every outcome is weighted equally and a finite trial horizon is taken into account~\citep{hardwick1995modified}. A variety of different MDP BRA procedures have been proposed in the literature, showing the flexibility of this approach. In~\citet{berry1995adaptive}, an MDP BRA procedure was introduced which optimises, under a Bayesian model, the expected outcomes of~participants in the trial as well as the expected outcomes of a finite number of patients which are all allocated to one treatment after the trial is completed. This procedure hence makes an explicit trade-off between individual and collective ethics. 
While RAR procedures can be constructed based on index-based BRA procedures~\citep{bather1981Randomized}, the randomization component is not taken into account in the optimisation.  
\citet{cheng2007optimal} introduced an MDP Bayesian RAR (BRAR) procedure where the randomization was included in the optimisation in a natural way. \citet{williamson2022generalisations} considered extensions of this approach, including constraints on the minimum allocations to both treatment groups, delays in outcomes, and random arrivals. 
\citet{ hardwick2008response} introduced a multi-objective MDP BRA procedure, where the two objectives revolved around patient~benefit and best treatment selection.
 In~\citet{MERRELL2023107599} an MDP BRA procedure was formulated for block-wise allocation, optimising the block sizes, number of blocks, and treatment allocations for blocks, where in the objective a trade-off is made between the expected number of successes and OCs.
\citet{YI2023125} formulated an infinite-horizon MDP for allocation in a clinical trial with general outcomes and showed that the allocation ratio converges, ensuring the validity of likelihood-based tests. 

The current paper introduces a novel class of constrained MDP BRA procedures,~CMDP procedures for short. The CMDP  procedures follow from a modified version of a constrained Markov decision process~\citep{altman1999constrained}, where the modification is that it is possible to use different expectation operators in each constraint. This modification is needed to optimise the allocation of treatment while keeping, e.g., type~I~error and power under control. In MDP BRA procedures, optimisation is often performed under a vague prior, while the probability distribution of the outcomes under, for instance, a type~I~error constraint can have strong assumptions on the parameters such as equality of the expectation of the outcomes, hence different prior distributions are needed to formulate Bayesian type~I and power constraints. A CMDP procedure can be made to satisfy requirements on OCs for the trial using constraints while optimising patient~benefit.
In comparison to previous methods from the literature, 
the CMDP approach can be considered a more natural way to directly impose a desired behaviour for an RA procedure while ensuring that, given the imposed constraints, the obtained policy is optimal, which makes it possible to reach a higher patient~benefit.
An additional advantage over penalized methods is that adding constraints will have an effect that is known beforehand, namely it will shrink the feasible region for the policies, whereas the effect of adding a penalty term to the objective is less clear beforehand. 

Throughout the current paper we make use of an efficient implementation of backward and forward recursion following~\citet{jacko2019binarybandit}, where we make use of a conservation law for the states, use a storage mapping function to store values efficiently, and overwrite elements of the value function not used further in the algorithm.
This makes it possible to compute the MDP BRA procedures considered in this paper in a relatively short amount of time. Furthermore, this makes it possible to compute the values of clinical trial OCs directly instead of approximating them by simulation, avoiding Monte Carlo error. 

The current paper is structured as follows. Section~\ref{sect:model} introduces the model of a binary two-arm clinical trial where outcomes are collected using a  response-adaptive procedure, as well as relevant operating characteristics. Section~\ref{sect:CMDP} introduces the class of CMDP procedures, as well as an algorithmic method for determining CMDP procedures. In section~\ref{sect:applications} CMDP procedures are constructed that optimise patient~benefit under a restriction on the power and type~I~error, a restriction on the mean squared error, and also on a restriction on robustness to prior misspecification. Section~\ref{sect:discussion} concludes the paper and gives directions for future research.

\section{Model and operating characteristics}\label{sect:model}
\subsection{Two-arm Response adaptive design with binary outcomes}~\label{subsect:model}
 We consider a trial, in which there are two treatments (arms) with unknown outcome distributions, the control and developmental treatment.
Trial~participants are sequentially allocated to a treatment using a response-adaptive (RA) procedure which, given the current trial history, determines the probability that the next participant obtains a given treatment. After allocation, the (binary) outcome of the participant, sampled from a Bernoulli distribution, becomes available and is added to the trial history before allocating the next participant.

We now make the above formal. Let~$\btheta=(\theta_\C,\theta_\D)\in[0,1]^2$ 
be a tuple of (unknown) success probabilities, where~$\C$ denotes the control treatment and~$\D$ 
denotes the developmental treatment. In the following, the same convention (i.e., first~$\C$ then~$\D$) will be used to construct tuples from variables for the control and developmental treatment.
For a fixed trial size~$n\in\mathbb{N}$, let~$\bY = (Y_{a,t})_{a\in\{\C,\D\},t\in\{1,\dots, n\}}$ be a sequence of independent Bernoulli random variables, where~$\mathbb{P}_\btheta(Y_{a,t}=1)=\theta_a$ for~$a\in\{\C,\D\}$. The random variable~$Y_{a,t}$ denotes a potential outcome for trial participant~$t$ under treatment~$a$.
Let~$\mathcal{H}=\bigcup_{t=0}^\infty\mathcal{H}_t$ where~$\mathcal{H}_0=\{()\}$ only contains the empty tuple and~$
     \mathcal{H}_t=\{(a_1,y_{1},a_2, y_{2},\dots, a_t,y_{t}): y_{w}\in\{0,1\},\,a_w\in\{\C,\D\},\;\forall w\leq t\} \label{defn:histories}
$
is the set of possible trial histories of outcomes and actions up to participant~$t$.
An RA procedure~$\pi:\mathcal{H}\mapsto [0,1]$ maps a trial history to the probability that the subsequent participant is allocated to the control treatment. Trial~participants are allocated sequentially, after which the outcome for that trial participant is observed before allocating the next participant, i.e.,  letting~$\bH_0=()$ we recursively define the realised trial history as~$\bH_t=(A_1,Y_{A_1,1},A_2,Y_{A_2,2},\dots, A_t, Y_{A_t,t})$, where each~$A_{t}$ is an independently drawn Bernoulli random variable with~$\mathbb{P}(A_t=1)=\pi(\bH_{t-1})$ for~$t=1,\dots, n$.

Let~$S_{a,t}$ be the recorded number of successes, and~$N_{a,t}$ denote the number of allocations for arm~$a$ up to time~$t$, i.e.,~$$S_{a,t}= \sum_{u=1}^t Y_{a,u}\mathbb{I}(A_u=a),\indent N_{a,t} = \sum_{u=1}^t \mathbb{I}(A_u = a),\indent \forall a\in\{\C,\D\}, \;t\in\{1,\dots,n\},~$$ where~$\mathbb{I}$ denotes the indicator function. 
Let~$\bX_t = (\bS_{t},\bN_{t})$ be a state variable containing the successes and allocations for each arm up to time~$t$. Letting~$\bX_0=((0,0),(0,0))$, the state space for~$\bX=(\bX_t)_{t=0}^n$ is~$\mathcal{X} = \cup_{t=0}^n\mathcal{X}_t$ where for all~$t$:$$ \mathcal{X}_t=\{((x_{11},x_{12}),(x_{21},x_{22})): x_{ij}\in \{0,\dots, t\},\,x_{1j}\leq x_{2j},\;x_{21}+x_{22}=t,\;\forall i,j\in\{1,2\} \}.$$
Let~$\partial\bs_\C = ((1,0),(1,0))$ and~$\partial\bff_\C = ((0,0),(1,0))$ be the change in~$\bX_t$ after a success and failure for the control arm, and let~$\partial \bs_\D$,~$\partial\bff_D$ be defined similarly. Letting~$q(\bx_{t+1},\bx_t,a) = \mathbb{P}_\btheta(\bX_{t+1} = \bx_{t+1}\mid\bX_t=\bx_t,A_t = a)$, we have for \mbox{ all~$t\in\{0,\dots,n\}$,~$\bx_t\in\mathcal{X}_t,\bx_{t+1}\in\mathcal{X}_{t+1}$,~$a\in\{\C,\D\}$, and~$\btheta\in[0,1]^2$}

\begin{equation}
q(\bx_{t+1},\bx_t,a)= \begin{cases}
\theta_a,\indent &\text{ if~$\bx_{t+1} = \bx_t +  \partial \bs_a$,}\\
(1-\theta_a),\indent &\text{ if~$\bx_{t+1} = \bx_t +\partial \bff_a$,}
\end{cases}\label{Transdyd_state}\end{equation}
where the addition of tuples is understood to be element-wise.
If the RA procedure~$\pi$ is a function of~$\bX_t$, i.e, can be written as a function~$\pi:\mathcal{X}\mapsto [0,1]$, the process~$\bX= (\bX_t)_t$ is a Markov process with  transition structure 
$$ \mathbb{P}^\pi_\btheta(\bX_{t+1}=\bx_{t+1}\mid \bX_t=\bx_t) = \sum_{a\in\{\C,\D\}} q(\bx_{t+1},\bx_t,a)\pi(\bx_t)^{\iota_a}(1-\pi(\bx_t))^{1-\iota_a}$$
for all~$\bx_t\in\mathcal{X}_t,\bx_{t+1}\in\mathcal{X}_{t+1}$, where~$\mathbb{P}^{\pi}_\btheta$ denotes the probability measure on states induced by the RA procedure~$\pi$ and~\eqref{Transdyd_state} and~$\iota_a=\mathbb{I}(a=\C)$ for all~$a\in\{\C,\D\}$. In this case,~$\pi$ is a Markov RA procedure~\citep{yi2013exact}.

\subsection{Definition of operating characteristics} \label{sect:ocs}
In this section, we define important OCs that can be calculated based on the model introduced at the start of Section~\ref{sect:model}. 

First, at decision epoch~$t=n$, denoted the trial horizon, we test for a treatment effect~$\theta_\D-\theta_\C$, i.e., we test
   ~$$H_0:\theta_\D-\theta_\C=0\indent \text{v.s.}\indent H_1:\theta_\D-\theta_\C\neq 0.$$
    The test of choice is often Fisher's exact test which, for a significance level~$\alpha\in(0,1)$, rejects when~$\Tau(\bX_n)\leq \alpha$, where, letting~${s}_a(\bx_t)$ and~${n}_a(\bx_t)$ be the number of successes and allocations encoded in~$\bx_t$ and~$s(\bx) = s_\C(\bx)+s_\D(\bx)$ for all~$\bx\in\mathcal{X}$,
   ~$$\indent \Tau(\bX_n)= \sum_{\substack{\bx'_n\in\mathcal{X}_n\\
    P(\bx'_n)\leq P(\bx_n)}} P(\bx'_n),\indent \text{ and }\indent  P(\bx_n)= \frac{\binom{n_\C(\bx_n)}{s_\C(\bx_n)}\binom{n_\D(\bx_n)}{s_{\D}(\bx_n)}}{\binom{n}{s(\bx_n)}}\indent \forall \bX_n\in\mathcal{X}_n.$$
 This test is exact, i.e., the type~I~error is bounded by (and as close as possible up to a discreteness error to) the significance level~$\alpha$ when~$\pi(\bx)=1/2$ for all~$\bx\in\mathcal{X}$~\citep{agresti1992}.

According to the above testing situation, we can define four OCs, which depend on the parameters~$\btheta$, where we let~$\mathbb{E}^\pi_\btheta$ denote the expectation w.r.t.~$\mathbb{P}^\pi_\btheta$.
\begin{itemize}
  \item  {\bf Patient Benefit}:\\ 
The patient benefit is calculated as~$$ \mathbb{E}^\pi_\btheta[n_{\C}(\bX_n)/n]\cdot \mathbb{I}(\theta_\C >\theta_\D) + \mathbb{E}^\pi_\btheta[1-n_\C(\bX_n)/n]\cdot \mathbb{I}(\theta_\D >\theta_\C) + 1/2\cdot \mathbb{I}(\theta_\C=\theta_\D).$$ This OC represents the patient~benefit in the trial given the parameters, which we want to be high.\\ 
 \item {\bf Rejection rate (RR)}:\\ This OC equals~$\mathbb{P}_\btheta^\pi(\Tau(\bX_n)\leq \alpha)$, i.e., the probability  of rejecting~$H_0$ at the end of the trial. If~$\theta_\C=\theta_\D$, this probability equals the type~I error and we want the rejection rate to be less than~$\alpha$. If~$\theta_\C\neq \theta_\D$, this OC equals the power and we want the rejection rate to be high. \\
\item {\bf Bias}:\\
This OC equals~$\mathbb{E}^\pi_\btheta[\hat{\theta}_\D(\bX_n)-\hat{\theta}_\C(\bX_n)] -  \theta_\D-\theta_\C,$ where for all~$\bx_n\in\mathcal{X}_n$,~$a\in\{\C,\D\}$
$$\hat{\theta}_a(\bx_n) = \begin{cases}
    s_{a}(\bx_n)/n_{a}(\bx_n),\indent&\text{if~$\min_{a}n_{a}(\bx_n)>0$,}\\
     (s_{a}(\bx_n)+1)/(n_{a}(\bx_n)+2),&\text{else.}
\end{cases}$$
The above adjustment to the maximum likelihood estimator~$s_a(\bx_n)/n_a(\bx_n)$ is made in order to provide an estimate when either of the treatment groups contains zero observations. The OC equals the expected error in the estimated treatment effect at the end of the trial, and we want the bias to be as close to zero as possible.\\
\item  {\bf Mean squared error (MSE)}:\\ 
This OC equals~$ \mathbb{E}^\pi_\btheta[(\hat{\theta}_\D(\bX_n)-\hat{\theta}_\C(\bX_n) - (\theta_\D-\theta_\C))^2]$
and corresponds to the quality of the treatment effect estimate~$\hat{\theta}_\D(\bX_n)-\hat{\theta}_\C(\bX_n)$, we want this OC to be low.
\end{itemize}

\section{Constrained Markov decision processes}\label{sect:CMDP}
In this section, we introduce the class of CMDP procedures.
The class of CMDP procedures is based on constrained optimisation. Trial~participants are sequentially allocated treatment, where the probability of allocating a treatment is determined based on a trade-off between exploration and exploitation, while the resulting distribution over states should also be such that certain constraints are satisfied. It is allowed to formulate the constraint under a different transition structure. This makes it possible to define, e.g., type~I~error or power constraints, where optimisation can be performed under a model that does not assume any prior information on~$\btheta$, while the kernel under the null hypothesis would restrict~$\theta_\C=\theta_\D$, and the kernel under the alternative hypothesis could, e.g., assume~$|\theta_\D-\theta_\C|\geq \xi$ for some~$\xi\in(0,1).$

\subsection{Formulation of the optimisation problem}
 We determine a CMDP~procedure according to a Bayesian optimisation problem, where the expected number of successes is maximized under a prior predictive distribution, given a set of policy constraints. 
 We assume an independent~$\text{Beta}(\tilde{s}_{a,0},\tilde{f}_{a,0})$ prior~$\Pi$  for each arm~$a\in\{\C,\D\}$, where~$\text{Beta}(\cdot,\cdot)$ denotes the Beta distribution, and~$\tilde{s}_{a,0},\tilde{f}_{a,0}\in (0,\infty)$ are a prior number of successes and failures.
 The~$\text{Beta}(\tilde{s}_{a,0},\tilde{f}_{a,0})$ prior is a conjugate prior for the Bernoulli distribution, which brings the advantage that the posterior distribution for the success probability is known in closed form.
 As in~\citet{cheng2007optimal}, for a CMDP procedure, the actions~$\delta_t\in[1-p,p]$ correspond to the probability of allocating the next participant to the control arm, where~\mbox{$p\in[1/2,1]$} is the degree of randomization.
Let~\mbox{$\tilde{s}_a(\bx_t) = s_a(\bx_t)+\tilde{s}_{a,0}$} and~$\tilde{n}_a(\bx_t) = n_a(\bx_t)+\tilde{s}_{a,0}+\tilde{f}_{a,0}$   for each arm~\mbox{$a\in\{\C,\D\}$.}
The transition dynamics~\eqref{Transdyd_state} can be made independent of~$\btheta$ by taking the expectation of the transition probabilities with respect to the prior, in which case, following~\citep[Section~2.3]{williamson2017bayesian} and \mbox{ letting~$q(\bx_{t+1},\bx_t,\delta_t)=\mathbb{P}(\bX_{t+1}=\bx_{t+1}\mid \bX_t=\bx_t,\delta_t)$:}
\begin{equation}
q(\bx_{t+1},\bx_t,\delta_t)=\begin{cases}
\delta_t\cdot \tilde{s}_{\C}(\bx_t)/\tilde{n}_{\C}(\bx_t), \indent &\text{ if~$\bx_{t+1} = {\bx}_t +  \partial \bs_\C$,}\\
\delta_t\cdot (1-\tilde{s}_{\C}(\bx_t)/\tilde{n}_{\C}(\bx_t)),\indent &\text{ if~${\bx}_{t+1} = {\bx}_t +  \partial \bff_\C$,}\\
(1-\delta_t)\cdot \tilde{s}_{\D}(\bx_t)/\tilde{n}_{\D}(\bx_t),\indent &\text{ if~${\bx}_{t+1} = {\bx}_t +  \partial \bs_{\D},$}\\
(1-\delta_t)\cdot (1- \tilde{s}_{\D}(\bx_t)/\tilde{n}_{\D}(\bx_t) ),\indent &\text{ if~${\bx}_{t+1} = {\bx}_t +  \partial \bff_{\D}.$}\\
\end{cases}\label{transdyd_state_CMDP}\end{equation}

For all~$\bx_n\in\mathcal{X}_n$,~$t< n$,~$\delta_t,\delta_n\in[1-p,p]$,  and~$\bx_t\in\mathcal{X}_t$ let 
$$r({\bx}_t, \delta_t) = \delta_t\tilde{s}_{\C}(\bx_t)/\tilde{n}_{\C}(\bx_t) + (1-\delta_t)\tilde{s}_{\D}(\bx_t)/\tilde{n}_{\D}(\bx_t),\indent r({\bx}_n, \delta_n) = 0,$$ be the posterior mean rewards after choosing allocation probability~$\delta_t$ for the control treatment in state~${\bx}_t$, for~$t\leq n$. 
A CMDP~procedure now maximizes the total expected reward, i.e., the expected sum of~$r(\bX_t, \delta_t)$ over decision epochs. 

Constraints are enforced ensuring good OCs for the resulting procedure.
The constraints are defined in terms of total expected reward under an alternative prior for~$\btheta$, i.e., defining priors~$\Pi_c$ for~$\btheta$ for~$c$ in a finite countable index set~$\mathcal{C}$,  transition probabilities~$\mathbb{P}_c(\bX_{t+1} = \bx_{t+1}\mid\bX_t=\bx_t,\,\delta_t)$ can be defined according to~\eqref{Transdyd_state} by integrating out~$\btheta$ w.r.t.~$\Pi_c$. Defining rewards~$r_c(\bx_t,\delta_t)$, and values~$V_c\in\mathbb{R}$  for all~$\bx_{t+1},\bx_t\in\mathcal{X},\;\delta_t\in[1-p,p]$ for~$c\in\mathcal{C}\subseteq \mathbb{N}_0$, a CMDP~procedure  is defined as the solution of
 \begin{subequations}
    \begin{align}
    &\underset{\pi}{\max}\,\mathbb{E}^{\pi}\left[\sum_{t=0}^{n}r({\bX}_t,\delta_t)\right]\label{CMDP}\\
    &\textnormal{s.t.}\nonumber\\
&\mathbb{E}^{\pi}_c\left[\sum_{t=0}^{n}r_c({\bX}_t,\delta_t)\right]\leq V_c\indent \forall c\in\mathcal{C}\label{constraints}.
    \end{align}
    \label{CMDP_tot}
  \end{subequations}\\\hspace{-1.5mm}
Problem~\eqref{CMDP_tot} will be referred to as the CMDP problem. As the Constraints~\eqref{constraints} are specified under expectation operators different from the expectation operator in the Objective~\eqref{CMDP}, Problem~\eqref{CMDP_tot} is a generalization of a finite horizon constrained Markov decision problem as defined in~\mbox{\citet{altman1999constrained}.}
We now give three examples of CMDP procedures:

\phantom{.}
\begin{example}[Control of power and type~I~error] \label{example:CMDP-T}
    A downside of RA~procedures is that reliable statistical inference is complicated due to an imbalance and variance over treatment group sizes, as well as dependence in the outcomes induced by the RA procedure, which can lead to a reduction of power and/or type~I~error inflation in comparison to a NA~design~\citep{robertson2023response}.  To test a certain hypothesis on the difference~\mbox{$\theta_\D-\theta_\C$} after collecting data using the CMDP~procedure, the constraints~\eqref{constraints} can be specified such that the Bayesian expected power and type~I~error remain under control. This idea has similarities with Bayesian sample size determination, where Bayesian expected power and type~I~error are used to determine the sample size of a clinical trial based on historical data~\citep{BROWN198729}.

We propose to control the type~I~error by defining a prior~$\Pi_0$ with \mbox{support~$\Theta_0= \{\btheta\in[0,1]^2: \theta_\C = \theta_\D\}$,} agreeing with the null hypothesis~$H_0$. In this evaluation, we let~$\Pi_0$ correspond to a~$\text{Beta}(\tilde{s}_{0}, \tilde{f}_0)$~prior on~$\theta_\C$ with the additional restriction~$\theta_\C=\theta_\D,$ where~$\tilde{s}_0,\tilde{f}_0\in(0,\infty)$. As in~\eqref{transdyd_state_CMDP}, we integrate out the parameter~$\btheta$ in~\eqref{Transdyd_state} but now using the prior~$\Pi_0$. Letting~$\tilde{s}(\bx_t) = s(\bx_t) + \tilde{s}_{0}$,~$n(\bx_t)=n_\C(\bx_t)+n_\D(\bx_t)$,~$\tilde{n}(\bx_t)=n(\bx_t)+\tilde{s}_0+\tilde{f}_0$, \mbox{and~$q_0(\bx_{t+1},\bx_t,\delta_t)=\mathbb{P}_0(\bX_{t+1}=\bx_{t+1}\mid \bX_t=\bx_t,\delta_t)$} we obtain 
for all~${\bx}_t\in{\mathcal{X}_t},{\bx}_{t+1}\in{\mathcal{X}_{t+1}}$ 
~$$q_0({\bx}_{t+1}{\bx}_t,\delta_t) = \begin{cases}
\delta_t\cdot \tilde{s}(\bx_t)/\tilde{n}(\bx_t), \indent &\text{ if~$\bx_{t+1} = \bx_t +  \partial \bs_\C$,}\\
\delta_t\cdot (1-\tilde{s}(\bx_t)/\tilde{n}(\bx_t)),\indent &\text{ if~$\bx_{t+1} = \bx_t +  \partial \bff_\C$,}\\
(1-\delta_t)\cdot \tilde{s}(\bx_t)/\tilde{n}(\bx_t),\indent &\text{ if~$\bx_{t+1} = \bx_t +  \partial \bs_{\D}$,}\\
(1-\delta_t)\cdot (1- \tilde{s}(\bx_t)/\tilde{n}(\bx_t) ),\indent &\text{ if~$\bx_{t+1} = \bx_t +  \partial \bff_{\D}.$}\\
\end{cases}$$
We enforce the type~I~error constraint
\begin{equation}\mathbb{P}^\pi_0(\Tau(\bX_n)\leq \alpha)=\int_{[0,1]^2} \mathbb{P}^\pi_{\btheta}(\Tau(\bX_n)\leq \alpha)\Pi_0(d\btheta)\leq \alpha^*.\label{typeIerror}\end{equation}
We see from~\eqref{typeIerror}  that the constraint enforces that the weighted average of the frequentist type~I~error~$\mathbb{P}^\pi_{\btheta}(\Tau(\bX_n)\leq \alpha)$ is bounded by~$\alpha^*$, hence to \mbox{ensure~$\mathbb{P}^\pi_0(\Tau(\bX_n)\leq \alpha\mid \btheta)\leq \alpha$} for all~$\btheta$, it is necessary that~$\alpha^*\leq \alpha$ above.

Similarly, we can define a prior~$\Pi_1$ with support in~$[0,1]^2$, and, given a required maximum type~II~error~$\beta\in(0,1)$, the power constraint becomes
$$\mathbb{P}^\pi_1(\Tau(\bX_n)\leq \alpha)\geq 1-\beta.$$
The resulting optimisation problem, the solution of which is denoted  as CMDP-T procedure, can be written as 
\begin{subequations}
    \begin{align}
    &\underset{\pi}{\max}\,\mathbb{E}^{\pi}\left[\sum_{t=0}^{n}r(\bX_t,\delta_t)\right]\\
    &\textnormal{s.t.}\nonumber\\
    &\mathbb{P}^\pi_0(\Tau(\bX_n)\leq \alpha)\leq \alpha^*,\\
&\mathbb{P}_1^\pi(\Tau(\bX_n)\leq \alpha)\geq 1-\beta.
\end{align}
\label{CMDP_TEST}
\end{subequations}\\\hspace{-1.5mm}
Note that~\eqref{CMDP_TEST} corresponds to~\eqref{CMDP_tot} with~$r_c(\bx_t,\delta_t) = 0$ for~$\bx_t\in\mathcal{X}_t$ and~$t<n$, 
~$V_0 = \alpha^*$,~$V_1=-(1-\beta)$ \mbox{and~$r_c(\bx_n,\delta_n) = (-1)^{\mathbb{I}(c=1)}\mathbb{I}(\Tau(\bX_n)\leq \alpha)$} for~$\bx_n\in\mathcal{X}_n$.
\end{example}
\phantom{.}
\begin{example}[Control of estimation error] \label{example: CMDP-E}
    Next to difficulties in testing, using an RA procedure can also lead to large errors when estimating treatment effects~\citep{Bowden2017unbiased}.
To improve the estimate of the treatment \mbox{effect~$\theta_\D-\theta_\C$} after collecting data using the CMDP~procedure, we specify the constraints~\eqref{constraints} such that the MSE remains under control. 

 We optimise the successes incurred under the~$\text{Beta}(\tilde{s}_{a,0},\tilde{f}_{a,0})$ prior introduced in Section~\ref{sect:model}, under the constraint that the posterior MSE is small.  
 In order to control the shape of the MSE curve over~$[0,1]^2$,
 we discretize~$[0,1]^2$ into a  collection of disjoint two-dimensional intervals~$$\mathcal{S}=\{[\theta_\C^\ell,\theta_\C^u)\times[\theta_\D^\ell,\theta_\D^u):0\leq\theta_a^\ell\leq \theta_a^u\leq 1\;\forall a\in\{\C,\D\}\}$$ 
 such that~$\cup_{\sigma\in\mathcal{S}}\sigma=[0,1)^2$ 
 and define~$\Pi_\sigma$ to be a~$\text{Beta}(\tilde{s}^E_{a,0},\tilde{f}^E_{a,0})$
 prior truncated to~$\sigma$  for all~$\sigma\in\mathcal{S}$,~$a\in\{\C,\D\}.$ 
 By the law of total expectation, letting~$\Pi_\sigma(\btheta\mid \bX_n)$ denote the posterior distribution under prior~$\Pi_\sigma$, the posterior MSE for a policy~$\pi$ over~$\sigma$ in~$\mathcal{S}$ can be expressed~as 
 \begin{equation}
\mathbb{E}_\sigma^\pi[\textstyle\int_{\sigma}(\hat{\theta}_\D(\bX_n) - \hat{\theta}_\C(\bX_n) - (\theta_\D - \theta_\C))^2d\Pi_\sigma(\btheta\mid \bX_n)].\label{eqn:MSE}
 \end{equation}
 The expression in \eqref{eqn:MSE} is the expectation of a function of the final state~$\bX_n$, which does not depend on the policy as all allocations have been realised, hence a bound on this quantity can be written as~\eqref{constraints}. To see whether policies with high patient~benefit, type~I~error control and high power, as well as low MSE can be found, we furthermore add the power and type~I~error constraints as in the CMDP-T formulation~\eqref{CMDP_TEST}.
The resulting optimisation problem, the solution of which is denoted  as CMDP-E RA procedure, can be written as 

\begin{subequations}
    \begin{align}
    &\underset{\pi}{\max}\,\mathbb{E}^{\pi}\left[\sum_{t=0}^{n}r(\bX_t,\delta_t)\right]\\
    &\textnormal{s.t.}\nonumber\\
    &\mathbb{P}^\pi_0(\Tau(\bX_n)\leq \alpha)\leq \alpha^*,\\
    &\mathbb{P}_1^\pi(\Tau(\bX_n)\leq \alpha)\geq 1-\beta,\\
&\mathbb{E}_\sigma^\pi[\textstyle\int_{\sigma}(\hat{\theta}_\D(\bX_n) - \hat{\theta}_\C(\bX_n) - (\theta_\D - \theta_\C))^2d\Pi_\sigma(\btheta\mid \bX_n)]\leq V_\sigma,\indent \forall \sigma\in\mathcal{S}.\label{constraints_squares}
\end{align}
\label{CMDP_EST}
\end{subequations}\\\hspace{-1.5mm}
The priors under the constraints~\eqref{constraints_squares} are designed in such a way that the constraints reflect the average behaviour of the policy on a specific part of the parameter space~$[0,1]^2$. 
\end{example}
\phantom{.}
\begin{example}[Robustness to prior misspecification] \label{example:CMDP-R}
    In clinical trials, historical data and expert opinion are often available on one or both treatments. In Bayesian RA procedures, historical data can be used to construct a prior distribution on the success probabilities, which may be leveraged to increase patient~benefit. However, if the actual treatment effects have low probability mass under the prior, i.e., when there is prior misspecification, the number of~participants allocated to the optimal arm can be lower than under a less informative prior. The constraints~\eqref{constraints} can be specified such that the RA procedure is robust against prior misspecification, by controlling the number of treatment failures~$1-Y_{A_t,t}$ under a less informative prior distribution. 

We assume that the incorporation of historical data and expert opinion leads to an independent~$\text{Beta}(\tilde{s}_{a,0},\tilde{f}_{a,0})$ prior on the success probability for each arm~$a$ in the trial, corresponding to the prior~$\Pi$ used for optimisation in~\eqref{CMDP_tot}.
Let~$\Pi_\text{LI}$ be a prior on the success probabilities that is less informative, e.g., a uniform prior, let ~$$r_\text{LI}(\bx_t,\delta_t) = -\delta_t \mathbb{E}_\text{LI}[\theta_\C\mid \bX_t=\bx_t] + (1-\delta_t) \mathbb{E}_\text{LI}[\theta_\D\mid \bX_t=\bx_t],\;\;\;\;\forall t<n,$$
and~$r_\text{LI}(\bx_n,\delta_n )=0$ for all~$t\in\{0,\dots, n\}$,~$\bx_t\in\mathcal{X}_t$,~$\bx_n\in\mathcal{X}_n$,~$\delta_t,\delta_n\in[1-p,p]$.
We optimise the successes incurred under~$\Pi$ under the constraint that the total expected number of successes under~$\Pi_\text{LI}$ is within a percentage of the maximum under that prior, i.e., letting~\mbox{$v_\text{LI}= \max_\pi \mathbb{E}_\text{LI}^{\pi}\left[\sum_{t=0}^{n}r_\text{LI}(\bX_t,\delta_t)\right]$} and~$\xi\in(0,1)$, the resulting optimisation problem, the solution of which is denoted  as CMDP-R RA procedure, can be written as 
\begin{subequations}
    \begin{align}
    &\underset{\pi}{\max}\,\mathbb{E}^{\pi}\left[\sum_{t=0}^{n}r(\bX_t,\delta_t)\right]\\
    &\textnormal{s.t.}\nonumber\\
    & -\mathbb{E}_\text{LI}^{\pi}\left[\sum_{t=0}^{n}r_\text{LI}(\bX_t,\delta_t)\right] \leq -\xi\cdot v_\text{LI}  ,
\end{align}
\label{CMDP_ROBUST}
\end{subequations}\\\hspace{-1.5mm}
which is of the form~\eqref{CMDP_tot} with~$V_\text{LI} = -\xi\cdot v_\text{LI}$. Note that the unconstrained version of~\eqref{CMDP_ROBUST} would correspond to the RA procedure in~\citet{cheng2007optimal} when not considering~participants outside the trial. 
\end{example}

\subsection{Properties of the constrained Markov decision process problem}
This section provides theoretical properties of the CMDP~problem~\eqref{CMDP_tot} and introduces a computational approach to obtain an optimal policy~$\pi^*$ for~\eqref{CMDP_tot} which involves solving a linear program~(LP). Section~\ref{sect:solutionmethod} provides a computationally more efficient procedure using backward recursion, which finds a feasible, but possibly suboptimal solution for~\eqref{CMDP_tot}, for which the (relative) optimality gap of the approximation can be calculated.

We first show that problem~\eqref{CMDP_tot} can be rewritten to a standard finite-horizon constrained Markov decision process. First, we need the following lemma.

\phantom{.}
\begin{lemma}~\label{lemma:datalikelihood_Bernoulli_RA}
  For all~$c\in\mathcal{C}$,~$t\leq n$, and  states~$\bx_t \in\mathcal{X}_t$
\begin{equation}\mathbb{P}^\pi(\bX_t=\bx_t)= g_t^\pi(\bx_t)q(\bx_t),\;\;\;\;\mathbb{P}_c^\pi(\bX_t=\bx_t)= g_t^\pi(\bx_t)q_c(\bx_t),\label{expression_likelihood}\end{equation}
where
\begin{align}
 q(\bx_t) &=  \int_{[0,1]^2}\prod_{a\in\{\C,\D\}}\theta_a^{s_a(\bx_t)}(1-\theta_a)^{n_{a}(\bx_t)-s_{a}(\bx_t)}\Pi(d\btheta),\label{eqn:q_firstmeas}\\
    q_c(\bx_t) &= \int_{[0,1]^2}\prod_{a\in\{\C,\D\}}\theta_a^{s_a(\bx_t)}(1-\theta_a)^{n_{a}(\bx_t)-s_{a}(\bx_t)}\Pi_c(d\btheta),\nonumber
\end{align}
and where~$g_t^\pi$ is defined recursively by
 \begin{align*}
g_0^\pi(\bx_0)&=1,\\
     \indent g_{t}^\pi(\bx_t) &= \sum_{\substack{a\in\{\C,\D\}\\\partial\bx_a\in\{\partial\bs_a,\partial{\bff}_a\}}}g_{t-1}^\pi(\bx_{t}-\partial\bx_a)\pi(\bx_{t}-\partial\bx_a)^{\mathbb{I}(a=\C)}(1-\pi(\bx_{t}-\partial\bx_a))^{\mathbb{I}(a=\D)},\label{eqn:defGcoefs}
\end{align*} 
for all~$\bx_t\in\mathcal{X}_t$, ~$t\in\mathbb{N}$.

\end{lemma}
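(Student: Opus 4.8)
The statement separates the probability $\mathbb{P}^\pi(\bX_t=\bx_t)$ into a factor $g_t^\pi(\bx_t)$ that depends only on the policy (the product of allocation probabilities along any path leading to $\bx_t$) and a factor $q(\bx_t)$ (respectively $q_c(\bx_t)$) that is the prior-predictive likelihood of the observed successes and failures, depending only on the prior. The natural route is induction on $t$. The base case $t=0$ is immediate: $\bX_0$ is the fixed initial state, $g_0^\pi(\bx_0)=1$, and $q(\bx_0)=\int_{[0,1]^2}1\,\Pi(d\btheta)=1$, so both sides equal $1$. For the inductive step, I would condition on the state and action at epoch $t-1$ and write
\begin{equation*}
\mathbb{P}^\pi(\bX_t=\bx_t)=\sum_{\bx_{t-1}\in\mathcal{X}_{t-1}}\sum_{a\in\{\C,\D\}}\mathbb{P}^\pi(\bX_{t-1}=\bx_{t-1})\,\pi(\bx_{t-1})^{\mathbb{I}(a=\C)}(1-\pi(\bx_{t-1}))^{\mathbb{I}(a=\D)}\,q(\bx_t,\bx_{t-1},a),
\end{equation*}
where $q(\bx_t,\bx_{t-1},a)$ is the $\btheta$-dependent transition probability from \eqref{Transdyd_state}, which must itself be integrated against the relevant prior since $\mathbb{P}^\pi$ already has $\btheta$ integrated out. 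The key observation is that only two predecessor states contribute: $\bx_{t-1}=\bx_t-\partial\bs_a$ (a success on arm $a$) or $\bx_{t-1}=\bx_t-\partial\bff_a$ (a failure on arm $a$), which is exactly the index set in the recursion for $g_t^\pi$.

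The crux is handling the interaction between the integration over $\btheta$ and the sum over predecessors. Substituting the inductive hypothesis $\mathbb{P}^\pi(\bX_{t-1}=\bx_{t-1})=g_{t-1}^\pi(\bx_{t-1})q(\bx_{t-1})$, one needs that $q(\bx_{t-1})$ times the $\btheta$-expectation of the one-step transition probability equals $q(\bx_t)$. Concretely, for a success on arm $a$ the transition probability is $\theta_a$, and
\begin{equation*}
\int_{[0,1]^2}\theta_a\prod_{a'}\theta_{a'}^{s_{a'}(\bx_t-\partial\bs_a)}(1-\theta_{a'})^{n_{a'}(\bx_t-\partial\bs_a)-s_{a'}(\bx_t-\partial\bs_a)}\Pi(d\btheta)=q(\bx_t),
\end{equation*}
since $s_a(\bx_t-\partial\bs_a)=s_a(\bx_t)-1$ and $n_a(\bx_t-\partial\bs_a)=n_a(\bx_t)-1$, so multiplying by $\theta_a$ restores the exponents of $\bx_t$; likewise a failure on arm $a$ carries probability $1-\theta_a$ and restores the exponents after $n_a$ drops by one while $s_a$ is unchanged. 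Thus $q(\bx_{t-1})\cdot(\text{integrated transition prob.})=q(\bx_t)$ for \emph{both} predecessors, so $q(\bx_t)$ factors out of the sum, leaving precisely $\sum g_{t-1}^\pi(\bx_{t-1})\pi(\cdot)^{\mathbb{I}(a=\C)}(1-\pi(\cdot))^{\mathbb{I}(a=\D)}=g_t^\pi(\bx_t)$. The identical argument with $\Pi_c$ in place of $\Pi$ gives the second equation; the policy factor $g_t^\pi$ is the same because it does not involve the prior.

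The main obstacle—more a bookkeeping subtlety than a real difficulty—is keeping the two layers of conditioning straight: the outer expectation $\mathbb{P}^\pi$ is over states with $\btheta$ marginalized out under $\Pi$, so when I unfold one step I must re-introduce the integral over $\btheta$ and cannot simply use \eqref{transdyd_state_CMDP} (which has already collapsed a single step under the prior and would not compose correctly across multiple steps unless the posterior updating is tracked). I would therefore phrase the induction directly in terms of the joint law of $(\bX_t,\btheta)$ under $\mathbb{P}^\pi_\btheta$ integrated against $\Pi$, i.e. work from the full data likelihood $\prod_a\theta_a^{s_a(\bx_t)}(1-\theta_a)^{n_a(\bx_t)-s_a(\bx_t)}$ times the path-probability $g_t^\pi(\bx_t)$, and only integrate at the end—at which point \eqref{eqn:q_firstmeas} appears verbatim. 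Edge cases where $\bx_t-\partial\bs_a$ or $\bx_t-\partial\bff_a$ falls outside $\mathcal{X}_{t-1}$ (e.g. a success count that would go negative) are handled by the convention that $g_{t-1}^\pi$ vanishes there, matching the fact that such a predecessor contributes zero probability.
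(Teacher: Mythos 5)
Your proposal is correct and takes essentially the same route as the paper: both rest on the factorization of the $\btheta$-conditional law, $\mathbb{P}^\pi_\btheta(\bX_t=\bx_t)=g_t^\pi(\bx_t)\prod_{a\in\{\C,\D\}}\theta_a^{s_a(\bx_t)}(1-\theta_a)^{n_a(\bx_t)-s_a(\bx_t)}$, followed by integration against $\Pi$ (resp.\ $\Pi_c$). The only difference is that the paper imports this factorization directly from the cited result of Yi (2013), whereas you re-derive it by induction on $t$, correctly noting that the prior-predictive one-step kernel \eqref{transdyd_state_CMDP} cannot be composed naively and that the integration over $\btheta$ should be deferred to the final step.
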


\begin{proof}
From \cite[Equation (1)]{yi2013exact}, we have
\begin{equation}\mathbb{P}_\btheta^\pi(\bX_t=\bx_t)= g_t^\pi(\bx_t)\prod_{a\in\{\C,\D\}}\theta_a^{s_{a,t}}(1-\theta_a)^{n_{a,t}-s_{a,t}}\indent\forall t.\label{deflikelihood}\end{equation}
The statement of the lemma follows as
\begin{align*}
    \mathbb{P}^\pi(\bX_t=\bX_t)  = \int_{[0,1]^2}\mathbb{P}_\btheta^\pi(\bX_t=\bx_t)\Pi(d\btheta),\\\mathbb{P}_c^\pi(\bX_t=\bX_t)=\int_{[0,1]^2}\mathbb{P}_\btheta^\pi(\bX_t=\bx_t)\Pi_c(d\btheta).
\end{align*}
\end{proof}
In Theorem~\ref{thm:CMDP_reformulated} below, Lemma~\ref{lemma:datalikelihood_Bernoulli_RA} is used to rewrite the constraints~\eqref{constraints} under the same expectation operator as the Objective~\eqref{CMDP} using a change of measure.

\phantom{.}
\begin{theorem}\label{thm:CMDP_reformulated}
    If~$q({\bx})=0$ implies~$q_c(\bx)=0$ for all~$c$ and~$\bx\in\mathcal{X}$, problem~\eqref{CMDP_tot} can be rewritten as
    \begin{subequations}
    \begin{align}
    &\underset{\pi}{\max}\,\mathbb{E}^{\pi}\left[\sum_{t=0}^{n}r({\bX}_t,\delta_t)\right]\\
    &\textnormal{s.t.}\nonumber\\
&\mathbb{E}^{\pi}\left[\sum_{t=0}^{n}\tilde{r}_c({\bX}_t,\delta_t)\right]\leq V_c\indent \forall c\in\mathcal{C},\label{CMDP2_cons}
\end{align}
\label{CMDP2_tot}
 \end{subequations}\\\hspace{-1.5mm}
where 
$$\tilde{r}_c({\bx}_t,\delta_t) =\begin{cases} r_c(\bx_t,\delta_t)q_c(\bx_t)/q(\bx_t),\indent &\text{if~$q(\bx_t)>0$},\\
0,&\text{else.}\end{cases}$$
\end{theorem}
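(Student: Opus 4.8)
The plan is a change-of-measure argument carried out state by state. Since the horizon $n$ and each level set $\mathcal{X}_t$ are finite, every expectation appearing in \eqref{CMDP_tot} and \eqref{CMDP2_tot} is a finite sum, so no integrability or interchange-of-summation issues arise, and it suffices to prove that for each fixed $c\in\mathcal{C}$ and each policy $\pi$ (regarded as a map on states as in Lemma~\ref{lemma:datalikelihood_Bernoulli_RA}, with $\delta_t=\pi(\bX_t)$) the left-hand side of constraint \eqref{constraints} equals the left-hand side of \eqref{CMDP2_cons}. The objective functionals are literally identical, so equality of every constraint then yields equality of the feasible sets, and hence of the optimal values and of the sets of optimal policies.

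First I would expand
$$\mathbb{E}_c^\pi\left[\sum_{t=0}^n r_c(\bX_t,\delta_t)\right]=\sum_{t=0}^n\sum_{\bx_t\in\mathcal{X}_t} r_c(\bx_t,\pi(\bx_t))\,\mathbb{P}_c^\pi(\bX_t=\bx_t),$$
together with the analogous identity for $\mathbb{E}^\pi[\sum_{t=0}^n\tilde r_c(\bX_t,\delta_t)]$ using $\mathbb{P}^\pi$ and $\tilde r_c$ in place of $\mathbb{P}_c^\pi$ and $r_c$. Applying Lemma~\ref{lemma:datalikelihood_Bernoulli_RA} replaces $\mathbb{P}_c^\pi(\bX_t=\bx_t)$ by $g_t^\pi(\bx_t)q_c(\bx_t)$ and $\mathbb{P}^\pi(\bX_t=\bx_t)$ by $g_t^\pi(\bx_t)q(\bx_t)$, so the two constraint left-hand sides become $\sum_{t}\sum_{\bx_t} r_c(\bx_t,\pi(\bx_t))g_t^\pi(\bx_t)q_c(\bx_t)$ and $\sum_{t}\sum_{\bx_t}\tilde r_c(\bx_t,\pi(\bx_t))g_t^\pi(\bx_t)q(\bx_t)$, respectively. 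It then remains to compare these summand by summand. For a state $\bx_t$ with $q(\bx_t)>0$, the definition of $\tilde r_c$ gives $\tilde r_c(\bx_t,\pi(\bx_t))q(\bx_t)=r_c(\bx_t,\pi(\bx_t))q_c(\bx_t)$, so the two summands coincide. For a state with $q(\bx_t)=0$ the hypothesis gives $q_c(\bx_t)=0$, so the first summand vanishes, and the second vanishes as well, both because it carries the factor $q(\bx_t)=0$ and because $\tilde r_c$ is defined to be $0$ there. Hence the two sums agree for every $\pi$, which is exactly the asserted reformulation.

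The computation is routine; the only point that genuinely invokes the hypothesis is the handling of the null set $\{\bx_t\in\mathcal{X}_t:q(\bx_t)=0\}$, where truncating $\tilde r_c$ to zero is harmless precisely because $\mathbb{P}_c^\pi$ assigns it no mass. It is worth recording why $\tilde r_c$ can be taken to be a fixed state-action reward independent of $t$ and of $\pi$: on any state of positive probability, Lemma~\ref{lemma:datalikelihood_Bernoulli_RA} shows the implied Radon--Nikodym factor $\mathbb{P}_c^\pi(\bX_t=\bx_t)/\mathbb{P}^\pi(\bX_t=\bx_t)$ equals $q_c(\bx_t)/q(\bx_t)$, the policy-dependent coefficient $g_t^\pi(\bx_t)$ cancelling. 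The only real obstacle, and it is a mild one, is keeping the bookkeeping of the $q(\bx_t)=0$ states straight so that the term-by-term identification is unambiguous.
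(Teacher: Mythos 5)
Your proposal is correct and follows essentially the same route as the paper: expand each constraint as a finite sum over states, substitute the factorizations $\mathbb{P}_c^\pi(\bX_t=\bx_t)=g_t^\pi(\bx_t)q_c(\bx_t)$ and $\mathbb{P}^\pi(\bX_t=\bx_t)=g_t^\pi(\bx_t)q(\bx_t)$ from Lemma~\ref{lemma:datalikelihood_Bernoulli_RA}, and match terms, with the hypothesis handling the states where $q(\bx_t)=0$. Your additional remark that the policy-dependent factor $g_t^\pi$ cancels, which is why $\tilde r_c$ can be a fixed policy-independent reward, is a useful observation the paper leaves implicit.
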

\begin{proof}
    The result follows as, by Lemma \ref{lemma:datalikelihood_Bernoulli_RA}, 
    \begin{align*}
&\mathbb{E}^{\pi}_c\left[\sum_{t=0}^{n}r_c(\bX_t,\delta_t)\right]=\sum_{t=0}^{n}\sum_{{\bx}_t\in\mathcal{X}_t}\mathbb{P}^\pi_c({\bX}_t={\bx}_t)r_c( {\bx}_t, \pi(\bx_t)) \\&\stackrel{\eqref{expression_likelihood}}{=}\sum_{t=0}^{n}\sum_{\substack{{\bx}_t\in\mathcal{X}_t\\ q({\bx}_t)>0}}g_t^\pi({\bx}_t)q_c({\bx}_t)r_c( {\bx}_t, \pi({\bx}_t))=\sum_{t=0}^{n}\sum_{\substack{{\bx}_t\in\mathcal{X}_t\\ q({\bx}_t)>0}}g_t^\pi({\bx}_t)q({\bx}_t)\tilde{r}_c( \bx_t, \pi(\bx_t)) \\&\stackrel{\eqref{expression_likelihood}}{=} \sum_{t=0}^{n}\sum_{\substack{{\bx}_t\in{\mathcal{X}_t}}}\mathbb{P}^\pi({\bX}_t={\bx}_t)\tilde{r}_c( \bx_t, \pi(\bx_t)) = \mathbb{E}^{\pi}\left[\sum_{t=0}^n\tilde{r}_c(\bX_t,\delta_t)\right] .
    \end{align*} 
\end{proof}

The next result follows from~\citet[Theorem 2.1 and Theorem 3.8]{altman1999constrained}, by reformulating the finite horizon constrained~MDP~\eqref{CMDP_tot} to a discounted infinite horizon constrained~MDP.

\phantom{.}
\begin{theorem}\label{thm:altman}
    If~$q({\bx})=0$ implies~$q_c(\bx)=0$, then for any feasible policy 
   which depends on the full history of states and actions, there is a feasible randomized Markov policy~$\pi':{\mathcal{X}}\mapsto [1-p,p]$ inducing at least the same total expected reward in~\eqref{CMDP_tot}, i.e., under Definition~2.2 in \citet{altman1999constrained}, the class of randomized Markov policies dominates the class of history-dependent policies for the CMDP. Furthermore, there exists an optimal Markov policy such that the number of random actions given the current state is at most~$|\mathcal{C}|$.
\end{theorem}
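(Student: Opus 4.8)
The plan is to invoke the classical theory of constrained Markov decision processes from \citet{altman1999constrained} after casting \eqref{CMDP_tot} into the framework to which that theory applies. The key observation is that, by Theorem~\ref{thm:CMDP_reformulated}, under the stated absolute-continuity hypothesis ($q(\bx)=0 \Rightarrow q_c(\bx)=0$) the problem is equivalent to \eqref{CMDP2_tot}, in which \emph{all} constraints are now expressed under the single expectation operator $\mathbb{E}^\pi$ governing the objective. Thus \eqref{CMDP2_tot} is a genuine finite-horizon constrained MDP in the standard sense: one controlled chain, one reward for the objective, $|\mathcal{C}|$ cost functions, and scalar budgets $V_c$.

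The second step is to turn the finite-horizon problem into a discounted infinite-horizon one so that Theorems~2.1 and~3.8 of \citet{altman1999constrained} apply verbatim. The standard device is to augment the state with the time index $t\in\{0,\dots,n\}$, add an absorbing cemetery state reached deterministically from any state at time $n$ (with zero reward and zero cost thereafter), and take discount factor $1$ on this transient chain --- or, if one prefers to stay literally inside Altman's discounted setting, use any $\gamma\in(0,1)$, since on a chain that is absorbed in at most $n+1$ steps the discounted and undiscounted criteria induce the same optimal policies once rewards are rescaled. On the augmented state space the chain is transient (absorbed within $n+1$ steps with probability one from every initial state), and the rewards/costs are bounded because $\mathcal{X}$ is finite and $\delta_t\in[1-p,p]$; hence the regularity conditions in \citet[Chapter~2--3]{altman1999constrained} hold. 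Feasibility of \eqref{CMDP_tot} (assumed implicitly, since otherwise the statement is vacuous) transfers to feasibility of the reformulation, so the constrained problem is not degenerate.

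The third step is simply to read off the two conclusions. Theorem~2.1 of \citet{altman1999constrained} states that for such constrained MDPs the class of (randomized) Markov --- indeed stationary, which on the time-augmented chain means exactly ``Markov'' in the original time-inhomogeneous sense --- policies is dominant: any history-dependent policy is matched in objective value, with all constraint values preserved, by a Markov policy, because the relevant \emph{state-action occupation measures} achievable by history-dependent policies coincide with those achievable by Markov policies. Since the objective and every cost in \eqref{CMDP2_tot} are linear functionals of this occupation measure, a feasible history-dependent policy yields a feasible Markov policy with at least the same expected reward. Theorem~3.8 of \citet{altman1999constrained} then characterizes the optimal policy via the occupation-measure LP: an optimal vertex of that polytope corresponds to a Markov policy that is deterministic in all but at most $|\mathcal{C}|$ state-time pairs, the randomization being forced only to make the $|\mathcal{C}|$ constraints active. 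Translating back through Theorem~\ref{thm:CMDP_reformulated} (which changes neither the policy class nor the feasible set, only relabels the cost functions) gives the claim for the original problem \eqref{CMDP_tot}: an optimal Markov policy $\pi':\mathcal{X}\mapsto[1-p,p]$ exists with at most $|\mathcal{C}|$ randomized actions.

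The main obstacle --- really the only non-bookkeeping point --- is checking that the hypotheses of Altman's theorems are met by the time-augmented, absorbing reformulation: namely that this is a well-defined transient (or contracting/discounted) constrained MDP with bounded costs and a nonempty feasible set, and that the notion of ``stationary policy'' on the augmented chain is precisely the notion of ``Markov policy $\pi:\mathcal{X}\mapsto[1-p,p]$'' in the original formulation. Once that identification is made explicit, both conclusions are immediate citations; no separate argument is needed, which is why the statement is phrased as ``follows from'' those two theorems.
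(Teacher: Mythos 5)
Your proposal is correct and follows essentially the same route as the paper, which simply invokes Theorem~2.1 and Theorem~3.8 of \citet{altman1999constrained} after noting that the finite-horizon problem (with constraints rewritten under the single expectation operator via Theorem~\ref{thm:CMDP_reformulated}, using the same absolute-continuity hypothesis) can be recast as a discounted infinite-horizon constrained MDP. Your additional bookkeeping (time-augmented state with an absorbing terminal state, boundedness, and the identification of stationary policies on the augmented chain with Markov policies $\pi:\mathcal{X}\mapsto[1-p,p]$) just makes explicit what the paper leaves implicit.
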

\phantom{.}

 Observe that randomization, a property that is usually desired in clinical trials, arises naturally for CMDP procedures according to Theorem~\ref{thm:altman}, i.e., without restricting the action space.
 Note that this is not the case for the MDP~BRA~procedures introduced in~\citet{cheng2007optimal} or~\citet{williamson2017bayesian},  where \mbox{setting~$p=1.00$} leads to a DRA~\citep[Theorem 1]{cheng2007optimal}, while  randomized optimal actions are enforced when \mbox{setting~$p<1$.} 
Furthermore, for the model introduced in Section~\ref{sect:model}, the condition that ~$q({\bx})=0$ implies~$q_c(\bx)=0$ is only violated when~$\Pi$ concentrates on~$\{0,1\}^2$ as, following~\eqref{eqn:q_firstmeas},~$q(\bx)$ is only zero if either \mbox{$\theta_\C^{s_\C(\bx)}$}, \mbox{$(1-\theta_\C)^{n_\C(\bx)-s_\C(\bx)},~$} \mbox{$\theta_\D^{s_\D(\bx)},$} \mbox{or $(1-\theta_\D)^{n_\D(\bx)-s_\D(\bx)}$} is zero with probability one under the prior~$\Pi$. When defining a CMDP procedure, such measures will usually not be considered for~$\Pi$. 

We now turn to solving~\eqref{CMDP2_tot}, and hence~\eqref{CMDP_tot}, by reformulating~\eqref{CMDP2_tot} as an LP. Let~$\mathcal{X}_{<n}=\cup_{t=0}^{n-1}\mathcal{X}_t$, \mbox{$d_{<n}=|\mathcal{X}_{<n}|$,}~$d_n=|\mathcal{X}_n|$,~$d_{\leq n} = |\mathcal{X}|$, and
$d=d_n + 2d_{<n}$ be the number of state-action pairs in the CMDP. Let 
$$\bA = \begin{bmatrix}
    \bI_{d_{<n}} &\bI_{d_{<n}}& \mathcal{O}_{d_{<n},d_n}\\
    \mathcal{O}_{d_n,d_{<n}} &\mathcal{O}_{d_n,d_{<n}}&\bI_{d_n}
\end{bmatrix} -\begin{bmatrix}
    \bP_C &\bP_D& \mathcal{O}_{d_{\leq n},d_n}
\end{bmatrix} \in\mathbb{R}^{d_{\leq n}\times d},$$
where~$\bI_k,\mathcal{O}_{k,\ell}$ are the identity and zero matrix in~$\mathbb{R}^{k\times k},\mathbb{R}^{k\times \ell}$  for all~$k,\ell\in\mathbb{N}$, and~$\bP_C,\bP_D\in\mathbb{R}^{d_{\leq n}\times d_{<n}}$ are matrices such that 
\begin{align*}[\bP_C]_{i(\bx),i(\bx')} &= \mathbb{P}(\bX_{t+1} = \bx\mid\bX_t=\bx',\,\delta_t=p),\indent &&\forall \bx\in\mathcal{X},\bx'\in\mathcal{X}_{<n},\\
[\bP_D]_{i(\bx),i(\bx')} &= \mathbb{P}(\bX_{t+1} = \bx\mid\bX_t=\bx',\,\delta_t=1-p),\indent &&\forall \bx\in\mathcal{X},\bx'\in\mathcal{X}_{<n},
\end{align*}
where~$i:\mathcal{X}\mapsto \{1,\dots, d_{\leq n}\}$ is a storage mapping function \citep{jacko2019binarybandit} for the states such \mbox{that~$i(\mathcal{X}_{<n})= \{1,\dots, d_{<n}\}$.}
Let~$\br\in\mathbb{R}^{d}$ be a vector defined as~$$r_j = \sum_{\bx\in\mathcal{X}}r(\bx,p)\mathbb{I}(j=i(\bx))+r(\bx, 1-p)\mathbb{I}(j=i(\bx)+d_{<n})
.$$
Let~$\tilde{\br}_c$ be a similarly defined vector for all~$c\in\mathcal{C}$, and 
$\bb$ be a vector denoting the initial distribution, defined such that~$b_j = \mathbb{I}(j=i(\bx_0))$ for all~$j.$
The next theorem states that the optimal policy for~\eqref{CMDP2_tot}, hence~\eqref{CMDP_tot}, can be found by solving a linear~program. 

\phantom{.}
\begin{theorem}
If~\eqref{CMDP_tot} is feasible and ~$q({\bx})=0$ implies~$q_c(\bx)=0$, an optimal policy~$\pi^*$ for~\eqref{CMDP_tot} is given by~$$\pi^*(\bx)=\begin{cases}
    \frac{p\mu_{i(\bx)}+(1-p)\mu_{i(\bx) + d_{<n}}}{(\mu_{i(\bx)} + \mu_{i(\bx) + d_{<n}})},\indent&\text{if~$\mu_{i(\bx)} + \mu_{i(\bx) + d_{<n}}>0$,}\\ 1/2 ,&\text{else,}
\end{cases}$$
where the vector~$\bmu$ is the solution to
\begin{subequations}
     \begin{align}
&\max_{\bmu\in\mathbb{R}^{d}}\bmu^\top \br \label{obj_LP}\\
&\textnormal{s.t.}\nonumber\\
& \bA\bmu = \bb,\label{lp:axleqb}\\
  &\bmu^\top \tilde{\br}_c \leq V_c \indent \forall c\in\mathcal{C}\label{lp:additionalcondstr},\\
  &\bmu\geq \bm 0\label{lp:nonneg}.
\end{align}
\label{LPform}
\end{subequations}\\\hspace{-1.5mm}
\end{theorem}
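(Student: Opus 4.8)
The plan is to recognise \eqref{LPform} as the classical occupation-measure linear program associated with a finite-horizon, \emph{acyclic} constrained MDP, and to establish equivalence with \eqref{CMDP_tot} in both directions. First I would invoke Theorem~\ref{thm:CMDP_reformulated} to replace \eqref{CMDP_tot} by the single-measure problem \eqref{CMDP2_tot}, and Theorem~\ref{thm:altman} to restrict attention to Markov policies. The observation that makes an LP possible is that the prior-predictive transition probabilities \eqref{transdyd_state_CMDP} and the reward functions $r(\bx_t,\delta_t)$ and $\tilde{r}_c(\bx_t,\delta_t)$ are all \emph{affine} in the action $\delta_t\in[1-p,p]$; consequently any randomized Markov policy is indistinguishable, for the law of $(\bX_t)_t$ and hence for every quantity appearing in \eqref{CMDP2_tot}, from one that mixes only the two extreme actions $\delta=p$ and $\delta=1-p$, and the mixing weight on $\delta=p$ at a state $\bx$ can be read off from the effective allocation probability $\pi(\bx)$ as $(\pi(\bx)-(1-p))/(2p-1)$ (any value in $[0,1]$ when $p=1/2$). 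It therefore suffices to show that \eqref{LPform} optimises the objective \eqref{CMDP} over Markov policies $\pi:\mathcal{X}_{<n}\to[1-p,p]$ written in this two-action form.

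For the forward direction, to each such $\pi$ I would associate the occupation vector $\bmu^\pi\in\mathbb{R}^{d}$ whose entry at a non-terminal state--action pair equals the expected number of visits to that pair under $\mathbb{P}^\pi$ (at most one, since the state encodes the epoch $t$), and whose entry at a terminal state $\bx\in\mathcal{X}_n$ equals $\mathbb{P}^\pi(\bX_n=\bx)$. Nonnegativity is immediate; $\bA\bmu^\pi=\bb$ is the layer-by-layer Chapman--Kolmogorov identity, where a row of $\bA$ collects the total occupation at a state via the identity blocks, subtracts the inflow from the previous epoch via $\bP_C,\bP_D$ (the one-step kernels under $\delta=p$ and $\delta=1-p$), and is balanced by the unit initial mass carried by $\bb$ at $\bx_0$; acyclicity --- transitions strictly increase $t$ --- is exactly what makes these flow equations hold with equality. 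Unfolding the definitions of $\br$ and $\tilde{\br}_c$ and using affineness in $\delta_t$ yields $(\bmu^\pi)^\top\br=\mathbb{E}^{\pi}[\sum_{t=0}^n r(\bX_t,\delta_t)]$ and $(\bmu^\pi)^\top\tilde{\br}_c=\mathbb{E}^{\pi}[\sum_{t=0}^n \tilde{r}_c(\bX_t,\delta_t)]$. Thus a feasible $\pi$ produces a feasible $\bmu^\pi$ with the same objective value; since \eqref{CMDP_tot} is feasible the LP is feasible, and since the total mass at each epoch equals $1$ by the flow equations its feasible set is bounded, so \eqref{LPform} attains an optimum $\bmu$.

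For the reverse direction, given the optimal $\bmu$ I would take $\pi^*$ as in the statement --- a convex combination of $p$ and $1-p$, hence in $[1-p,p]$, with the harmless default $1/2$ on states of zero occupation --- and prove $\bmu^{\pi^*}=\bmu$ by induction on the epoch. For $t=0$ the identity $\mathbb{P}^{\pi^*}(\bX_0=\bx_0)=\mu_{i(\bx_0)}+\mu_{i(\bx_0)+d_{<n}}$ is the $\bx_0$-row of $\bA\bmu=\bb$; for the inductive step, at an epoch-$t$ state $\bx'$ with positive occupation $\pi^*(\bx')$ is by construction precisely the effective allocation probability obtained by mixing $\delta=p$ with weight $\mu_{i(\bx')}/(\mu_{i(\bx')}+\mu_{i(\bx')+d_{<n}})$, so by affineness of the kernel the inflow into each epoch-$(t+1)$ state along $\pi^*$ equals the $\bP_C,\bP_D$-expression in the corresponding row of $\bA\bmu=\bb$, which equals its occupation, while states with zero occupation contribute nothing so the value of $\pi^*$ there is immaterial. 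The same computation matches the split over the two actions, giving $\bmu^{\pi^*}=\bmu$. Applying the forward-direction identities to $\pi^*$ then shows it satisfies \eqref{CMDP2_cons}, hence is feasible for \eqref{CMDP_tot}, with objective value $(\bmu)^\top\br$; this equals the LP optimum and therefore, by the forward direction combined with Theorem~\ref{thm:altman}, the optimal value of \eqref{CMDP_tot}, so $\pi^*$ is optimal.

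The step I expect to be the main obstacle is this reverse direction: verifying that the flow equations $\bA\bmu=\bb$ together with the affine-in-$\delta_t$ structure of the kernel uniquely reconstruct the state-visitation probabilities of $\pi^*$, so that $\bmu^{\pi^*}=\bmu$, and in particular handling cleanly the states with $\mu_{i(\bx)}+\mu_{i(\bx)+d_{<n}}=0$, which are unreachable under $\pi^*$ and where the policy formula falls back on $1/2$ without affecting any occupation or any objective or constraint value. The affineness reduction and the forward-direction bookkeeping are routine once the explicit form \eqref{transdyd_state_CMDP} and the definitions of $\br$ and $\tilde{\br}_c$ are in hand.
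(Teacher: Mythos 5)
Your proposal is correct and follows essentially the same route as the paper: the paper's proof simply invokes the dual (occupation-measure) linear programming formulation of the unconstrained finite-horizon MDP from \citet{puterman2014markov}, viewing $\delta_t\in[1-p,p]$ as randomization over the two extreme actions, appends the constraints \eqref{CMDP2_cons} as \eqref{lp:additionalcondstr}, and reads the policy off the state--action occupation vector $\bmu$, exactly as you do. The only difference is that you prove the occupation-measure correspondence (flow equations, affineness reduction, and the epoch-by-epoch reconstruction of $\bmu^{\pi^*}=\bmu$ including the zero-occupation states) from first principles rather than citing it.
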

\begin{proof}
    The result can be shown by writing the dual linear program formulation of the Markov decision process~\eqref{CMDP2_tot} without constraints~\citep{puterman2014markov}, consisting of~\eqref{obj_LP},~\eqref{lp:axleqb} and~\eqref{lp:nonneg}, where~$\delta_t\in[1-p,p]$ can be viewed as a Randomization over actions~$a_t\in\{1-p,p\}$.
The statement of the theorem follows by adding the constraints in~\eqref{CMDP2_cons} as~\eqref{lp:additionalcondstr}, which follow as~$\bmu$ is the probability vector for the state-action pairs in the CMDP, a fact that is also used in determining the optimal policy~$\pi^*$.
\end{proof}

\subsection{Proposed solution method} \label{sect:solutionmethod}
Solving~\eqref{LPform} can be computationally heavy for large values of~$d$.  The next theorem provides a way to find a solution to~\eqref{CMDP_tot} using backward recursion, which is computationally more  tractable. Conditions are given under which the obtained solution is optimal.

\phantom{.}
\begin{theorem}\label{Thm:minmax}
    Let~$V$ be the value of~\eqref{CMDP_tot}. If~\eqref{CMDP_tot} is feasible and~$q({\bx})=0$ \mbox{implies~$q_c(\bx)=0$,} we have 
    \begin{equation}
V = \min_{\blambda\in\mathbb{R}^C_+} L(\blambda), \;\; L(\blambda)=\max_{\pi}\mathbb{E}^\pi\left[\sum_{t=0}^{n}{r}(\bX_t, \delta_t) + \sum_c \lambda_c \left(V_c/n - \tilde{r}_c(\bX_t, \delta_t)  \right)\right].\label{Eqn: minmaxV}
    \end{equation}
    If~\eqref{Eqn: minmaxV} yields an optimal point~$(\hat{\blambda},\hat{\pi})$ such that \begin{equation} \sum_c \hat{\lambda}_c \mathbb{E}^{\hat{\pi}}\left[\sum_{t=0}^{n}\left(V_c/n - \tilde{r}_c(\bX_t, \delta_t)  \right)\right]=0,\label{KKTcond}\end{equation}
   i.e.,~$(\hat{\blambda}, \hat{\pi})$ satisfies the Karush-Kuhn Tucker~(KKT)~conditions~\citep{kuhn2014nonlinear},  then~$\hat{\pi}$ is an optimal solution to~\eqref{CMDP_tot}. 
\end{theorem}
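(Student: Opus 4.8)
The plan is to establish \eqref{Eqn: minmaxV} via Lagrangian duality for the reformulated problem \eqref{CMDP2_tot}, and then to deduce the optimality of $\hat\pi$ from the complementary slackness condition \eqref{KKTcond}. First I would invoke Theorem~\ref{thm:CMDP_reformulated} to replace \eqref{CMDP_tot} by the equivalent problem \eqref{CMDP2_tot}, in which all expectations are taken under the single operator $\mathbb{E}^\pi$; this is legitimate since the hypothesis $q(\bx)=0\Rightarrow q_c(\bx)=0$ is exactly the condition required by that theorem. I would then write the Lagrangian of \eqref{CMDP2_tot} with multipliers $\blambda\in\mathbb{R}^C_+$ attached to the $|\mathcal{C}|$ constraints \eqref{CMDP2_cons}, splitting each constant $V_c$ as $\sum_{t=0}^n V_c/n$ so that the Lagrangian becomes a single expected additive reward $\mathbb{E}^\pi[\sum_{t=0}^n(r(\bX_t,\delta_t)+\sum_c\lambda_c(V_c/n-\tilde r_c(\bX_t,\delta_t)))]$. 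For fixed $\blambda$ this is an ordinary (unconstrained) finite-horizon MDP, so the inner maximum $L(\blambda)$ is attained by a Markov policy and is finite; weak duality gives $V\le \min_{\blambda\ge 0}L(\blambda)$ immediately.

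The substantive step is strong duality, i.e.\ $V=\min_{\blambda\ge0}L(\blambda)$ with the minimum attained. Here I would appeal to the LP representation established in the preceding theorem: \eqref{CMDP_tot} is equivalent to the linear program \eqref{LPform} in the occupation-measure variable $\bmu$, which is feasible by assumption and has a bounded objective (the state space is finite and rewards are bounded), hence has an optimal value equal to that of its LP dual, with both attained. The LP dual of \eqref{LPform} is precisely $\min_{\blambda\ge0}L(\blambda)$ once one recognizes that, for fixed $\blambda$, maximizing the $\blambda$-augmented reward over occupation measures satisfying only \eqref{lp:axleqb}–\eqref{lp:nonneg} is the standard MDP dual LP whose value is $L(\blambda)$ (by Puterman, as already cited). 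This yields \eqref{Eqn: minmaxV} and the existence of an optimal $\hat\blambda$; the corresponding $L(\hat\blambda)$ is attained by some Markov policy $\hat\pi$, giving the optimal pair $(\hat\blambda,\hat\pi)$.

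Finally, for the second assertion: suppose $(\hat\blambda,\hat\pi)$ is optimal in \eqref{Eqn: minmaxV} and satisfies \eqref{KKTcond}. Then
\[
\mathbb{E}^{\hat\pi}\!\left[\sum_{t=0}^n r(\bX_t,\delta_t)\right]
= L(\hat\blambda) - \sum_c \hat\lambda_c\,\mathbb{E}^{\hat\pi}\!\left[\sum_{t=0}^n\bigl(V_c/n-\tilde r_c(\bX_t,\delta_t)\bigr)\right]
= L(\hat\blambda) = V,
\]
where the middle equality is \eqref{KKTcond} and the last is \eqref{Eqn: minmaxV}. It remains to check that $\hat\pi$ is feasible for \eqref{CMDP2_tot}: since $\hat\blambda\ge0$, if some constraint were violated, say $\mathbb{E}^{\hat\pi}[\sum_t\tilde r_c(\bX_t,\delta_t)]>V_c$, then increasing $\hat\lambda_c$ would strictly decrease the augmented reward evaluated at $\hat\pi$, hence strictly decrease $L(\hat\blambda)$ — but one must argue this carefully against the maximization over $\pi$ inside $L$. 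A cleaner route, which I would actually take, is: by weak duality every feasible $\pi$ has value at most $V$; by the displayed computation $\hat\pi$ attains value $V$; and feasibility of $\hat\pi$ follows because $L(\hat\blambda)=\max_\pi(\cdots)\ge$ the value obtained by any fixed feasible $\pi$, combined with \eqref{KKTcond} forcing the penalty terms at $\hat\pi$ to sum to zero while, term by term under $\hat\blambda\ge 0$, one shows each $\hat\lambda_c$-weighted slack is nonnegative at the optimizer — this is the standard KKT/complementary-slackness argument, and it is the one place where some care with signs is needed. The main obstacle is thus not any single computation but assembling the strong-duality identity cleanly from the already-established LP equivalence and then running the complementary-slackness argument so that feasibility of $\hat\pi$ is genuinely obtained rather than assumed.
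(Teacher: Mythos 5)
Your proposal is correct and follows essentially the same route as the paper: both dualize only the operating-characteristic constraints of the occupation-measure LP~\eqref{LPform}, justify the exchange of maximization and minimization by linear-programming (minimax) duality --- the paper cites \citet[Lemma 9.2]{altman1999constrained} where you invoke LP strong duality for the partial Lagrangian dual --- and identify the inner problem at fixed~$\blambda$ as an unconstrained finite-horizon MDP whose value is~$L(\blambda)$. Your explicit complementary-slackness computation (and your caution that primal feasibility of~$\hat{\pi}$ must be verified rather than assumed) is precisely the content the paper delegates to its citation of the Karush--Kuhn--Tucker conditions.
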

\begin{proof}
 Using Lagrange multipliers, letting~$\tilde{\bR},\;\tilde{\bV}$ be the horizontal concatenation of the vectors~$\tilde{\br}_c$ and constants~$V_c$, we reformulate~\eqref{LPform} to
 \begin{subequations}
     \begin{align}
&\max_{\bmu\in\mathbb{R}^{d}}\min_{\blambda\geq \bm 0}\,\bmu^\top \br + \blambda^\top(\tilde{\bV}-\tilde{\bR}\bmu    ) \\
&\textnormal{s.t.}\nonumber\\
& \bA\bmu = \bb,\\
  &\bmu\geq \bm 0.
\end{align}
\label{LPform2}
 \end{subequations}\\\hspace{-1.5mm}
By \citep[Lemma  9.2]{altman1999constrained}, we have that~\eqref{LPform2} is equivalent to 
\begin{subequations}
    \begin{align}
&\min_{\blambda\geq \bm 0}\max_{\bmu\in\mathbb{R}^{d}}\,\bmu^\top \br + \blambda^\top(\tilde{\bV}-\tilde{\bR}\bmu   ) \\
&\textnormal{s.t.}\nonumber\\
& \bA\bmu = \bb,\\
  &\bmu\geq \bm 0.
\end{align}
\label{LPform2_2}
\end{subequations}\\\hspace{-1.5mm}
Choosing~$\blambda$ fixed in~\eqref{LPform2_2} results in the LP that returns~$L(\blambda)$,
from which~\eqref{Eqn: minmaxV} follows. The conditions for an optimal solution follow from~\citet{kuhn2014nonlinear}.
\end{proof}

The function~$L$ is a convex function in~$\blambda$, as it is the maximum over a set of affine functions in~$\blambda$ (one for each~$\pi$).
Backward recursion \citep{ puterman2014markov} can be performed to find the value~$L(\blambda)$ and a maximizer~$\pi^*_\blambda$ of~\eqref{Eqn: minmaxV} (letting~$\pi^*_\blambda(\bx_t)=1/2$ in case of ties). Hence, without loss of generality, the set of randomized Markov  policies optimized over in~\eqref{Eqn: minmaxV} can be restricted to the set~$\mathcal{P}$ of \mbox{policies~$\pi:\mathcal{X}\mapsto\{1-p,1/2,p\}$} when determining~$V$.
The optimal policy~$\pi^*_\blambda$ in this set can  be used to determine a subgradient~$\nabla L(\blambda)~$ of~$L$ such \mbox{that~$\nabla L(\blambda)_c= V_c-\sum_{t=0}^n\tilde{r}_c(\bX_t,\pi_\blambda^*(\bX_t))$} for all~$c\in\mathcal{C}$.
We propose Algorithm~\ref{alg:Lshaped} to find~$V$ up to a given numerical precision, which corresponds to a cutting plane method~\citep{kelley1960cutting}. In case of no feasible solution, the algorithm returns~$f^*=-\infty.$ Otherwise, as~$L$ is the maximum over a finite set of affine functions in~$\blambda$, one for each~$\pi\in\mathcal{P}$, the algorithm finds the minimizer after a finite number of iterations. However, due to the (possibly) large amount of affine functions included in the objective, this could take a long time, while a small error~$\epsilon_\text{tol}$ may usually be sufficient. If the policy found under backward recursion is infeasible, the elements of~$\blambda$ corresponding to the infeasible constraints are multiplied by a factor~$(1+\varphi)$ until the policy becomes feasible.
Letting~$\hat{\pi}$ be the resulting policy, the relative optimality gap can be found \mbox{as~$(V - \mathbb{E}^{\hat{\pi}}[\sum_{t=0}^nr(\bX_t,\delta_t)])/V$.}  If~\eqref{KKTcond} is satisfied, it can be verified that~$\hat{\pi}$ is the unique optimiser of~\eqref{CMDP_tot} by verifying \mbox{that~$\hat{\pi}(\bx)\in\{1-p,p\}$} for all~$\bx\in\mathcal{X}$.

\FloatBarrier
\begin{algorithm}[htb]
		\caption{Cutting plane algorithm for calculating V}\label{alg:Lshaped}
		\begin{algorithmic}[1]
			\Inputs{$L,\nabla L, |\mathcal{C}|,\epsilon_\text{tol}$;}
                \Initialize{Set~$\bA = \bm 0^\top_{|\mathcal{C}|+1}$,~$b = 0$,~$\epsilon = \infty,\;f^* = \infty$;}  
                \While{$\epsilon > \epsilon_\text{tol}$ and~$f^*\neq-\infty$}
                \State Solve \vspace*{-3mm}\begin{align*}
                &\hspace*{-50mm}\min_{\bx\in\mathbb{R}^{|\mathcal{C}|+1}} x_1\\
                    &\hspace*{-50mm}\textnormal{s.t.}\\
                    &\hspace*{-50mm}\bA\bx\leq \bb\\
                    &\hspace*{-50mm} \bx \geq \bm 0;
                \end{align*}
                \State Set~$\ubar{f}=x_1,\;\blambda^* = [x_2,\dots, x_{|\mathcal{C}|+1}], \;f^* = \min(f^*, L(\blambda^*)),\; \epsilon = f^*- \ubar{f}, \newline \; \bg~=~\nabla L(\blambda^*);$
                \State Redefine~$\bA = [\bA;[-1, \bg^\top]]$,\;~$\bb = [\bb; -L(\blambda^*) + \bg^\top\blambda^*]$ ($[\bv;\bw]$ denotes  concatenation); 
                \If{$L(\blambda^*)\leq0$}
                    \State~$f^*=-\infty$,~$\blambda^*=\emptyset$;
                    
                \EndIf
                \EndWhile
			\\{\bf Outputs:}~$f^*, \blambda^*$;
		\end{algorithmic}
	\end{algorithm}
\FloatBarrier

\section{Applications}\label{sect:applications}

In this section, we evaluate the performance of CMDP procedures in three applications. First, we evaluate the performance of CMDP-T from Example~\ref{example:CMDP-T}. Second, we evaluate the performance of CMDP-E from Example~\ref{example: CMDP-E}. Third, we evaluate the performance of CMDP-R from Example~\ref{example:CMDP-R}. All CMDP procedures are evaluated according to the OCs introduced in Section~\ref{sect:ocs}, which 
are calculated using forward recursion, instead of simulation, to determine the distribution of~$\bX_n$.

We compare the performance of the CMDP procedures with three other Markov RA procedures known from literature~\citep{williamson2017bayesian}, namely:
\begin{itemize}
    \item {\bf Equal randomization (ER)}: \\$\pi(\bx)=1/2$ for all~$\bx\in\mathcal{X}$.\\
    \item {\bf Dynamic Programming (DP)}:\\ This is the RA procedure~$\pi$ found from the unconstrained version of~\eqref{CMDP_tot} when taking~$p=1.00$.\\
    \item {\bf Constrained Randomized Dynamic Programming (CRDP):}\\
    This is the RA procedure proposed in~\citep{williamson2017bayesian} that follows from the unconstrained version of~\eqref{CMDP_tot} when taking~$p=0.9$ and adding a penalty equal to~$-n$ to the objective whenever~$\min_a n_a(\bx_n)<0.15n$. 
\end{itemize}

The code for calculation of the CMDP policies and the OCs uses an efficient implementation of backward and forward recursion based on~\citet{jacko2019binarybandit},  i.e., uses the conservation law for the states, a storage mapping function, and overwrites elements of the value function in backward recursion and probability distribution in forward recursion that are not used further in the algorithm. 
All experiments are performed in Julia~1.9.0 on a laptop with Intel\textsuperscript{\textregistered} Core\textsuperscript{TM} \hbox{i7-9750H~CPU} with~2.60GHz clock speed and~16 GB RAM. 
The absolute tolerance~$\epsilon_\text{tol}$ for Algorithm~\ref{alg:Lshaped} is set to~$10^{-9}$ for each evaluation. In order to obtain a feasible policy in each considered scenario, we set~$\varphi=0.01$. 
For the evaluation, we let~$\theta_\D\in\{0.00,0.01\dots, 1.00\}$ while fixing~$\theta_\C=0.5.$ In the numerical evaluation, we set the significance level~$\alpha~$ to~$ 0.1$, in agreement with the comparison in~\citet{williamson2017bayesian}. For all CMDP procedures, we set~$p=0.95$, while in the appendices results are also shown for~$p=1.00$. From now on, RA procedures based on a CMDP problem and solved using Algorithm~1 will also be denoted CMDP RA procedures.

\subsection{Application 1: Control of power and type~I~error}

\newcommand{\rewCRDPSMALLHOR}{45.3}
\newcommand{\timeCRDPSMALLHOR}{0.165}
\newcommand{\rewCRDPhighHOR}{122}
\newcommand{\timeCRDPhighHOR}{1.30}

First, we evaluate the CMDP-T procedure (Example~\ref{example:CMDP-T}) for~$n=75$. We set the prior~$\Pi_1$ for the power constraint equal to~$\Pi$.
After evaluating several choices based on the OCs, we set~$\alpha^* = 0.05$,~$\beta=0.4$. The prior parameters~$\tilde{s}_{a,0}, \tilde{f}_{a,0}$ for~$\Pi_0$ are set to~1, as are the parameters for the prior~$\Pi$, hence both~$\Pi_0$ and~$\Pi$ are uniform priors on their respective supports.
Computation of the CMDP-T policy took 7.44 seconds and resulted in a relative optimality gap of~$1.34\cdot 10^{-4}$ and total expected reward 46.9 under the prior~$\Pi$ in comparison to~$\timeCRDPSMALLHOR$ seconds for CRDP with total expected reward~$\rewCRDPSMALLHOR$.
 Figure \ref{results_N75} shows the patient benefit, RR,  bias, and MSE for the policies ER, DP, CRDP and \mbox{CMPD-T} for different values of~$\theta_\D$ and~$\theta_\C=0.5$. The same results for policies ER, DP, and CRDP can also be found in~\citet{williamson2017bayesian}, where the sign of the bias is reversed because we chose to vary the first success rate used in the treatment effect instead of the second one. Figure \ref{results_N75} shows that the CMDP-T~procedure performs very well in terms of patient~benefit, having performance in-between CRDP and DP for all evaluated parameter values. Furthermore, the CMDP-T~procedure has a similar performance in terms of power, while having a worse performance in terms of bias and MSE when compared to CRDP. It is hence seen for this scenario that a trade-off was made between patient~benefit and OCs bias and MSE for CMDP-T, keeping the rejection rate roughly the same. The bias and MSE of CMDP-T are however not as extreme as for DP, and closer to the bias and MSE of CRDP.

Second, we evaluate the results for CMDP-T for~$n=200$.
After evaluating several choices based on the OCs, we set~$\alpha^* = 0.07$, and~$\beta=0.23$. Again, the prior parameters~$\tilde{s}_{a,0}, \tilde{f}_{a,0}$ are set to 1, as are the parameters for~$\Pi$, hence both priors correspond to uniform distributions on their respective supports. Computation of \mbox{the CMDP-T policy} took~296.3~seconds, and resulted in a relative optimality gap of~$5.09\cdot 10^{-5}$ and total expected reward 128 under the prior~$\Pi$ in comparison to~$\timeCRDPhighHOR$~seconds for CRDP with total expected reward~$\rewCRDPhighHOR$.
Figure \ref{results_N200} shows the results. These results were not shown in~\citet{williamson2017bayesian}, where the maximum trial size considered was~$n=100$~participants as the focus was on rare disease trials with a small amount of~participants. The figure shows that in terms of patient~benefit, \mbox{the CRDP procedure} ends up around a value of~$0.85$ for~$\theta_\D$ close to zero and one, which is (roughly) the maximal value this RA procedure can attain, due to the penalty incurred when either~$N_{\C,n}/n$ or~$N_{\D,n}/n$ are lower than~$0.15$.
The~DP \mbox{and CMDP-T procedures} end up at higher values, around~$0.95$ (maximum for \mbox{CMDP-T}) and~$1.00$ (maximum for DP) for~$\theta_\D$ close to zero and one. The power plots show that the power quickly grows to 1.00 in~$|\theta_\D-\theta_\C|$ for policies ER, CRDP and \mbox{CMDP-T} where a power of~$80\%$ is roughly attained when~$|\theta_\D-\theta_\C|\geq 0.25$, while for the DP policy, the power is slightly higher than in Figure~\ref{results_N75}, with highly irregular behaviour around high values of~$\theta_\D.$ 

In conclusion, for~$n=200$ the CMDP-T~procedure   outperforms CRDP in terms of patient~benefit and power, while CMDP-T significantly outperforms ER in terms of patient~benefit and has slightly lower power. The bias and MSE for DP have not changed much in comparison with~$n=75,$ which is possibly due to the algorithm allocating all trial~participants to one treatment after a certain time point. The bias and MSE for ER, CRDP and CMDP-T have decreased significantly in comparison with Figure~\ref{results_N75}, where the same ordering as for~$n=75$ is seen in the OCs. Figure~\ref{results_N200} shows again that a trade-off was made between patient~benefit and OCs bias and MSE for the \mbox{CMDP-T~procedure.}

 \begin{figure}[htb]
		\includegraphics[width = \linewidth]{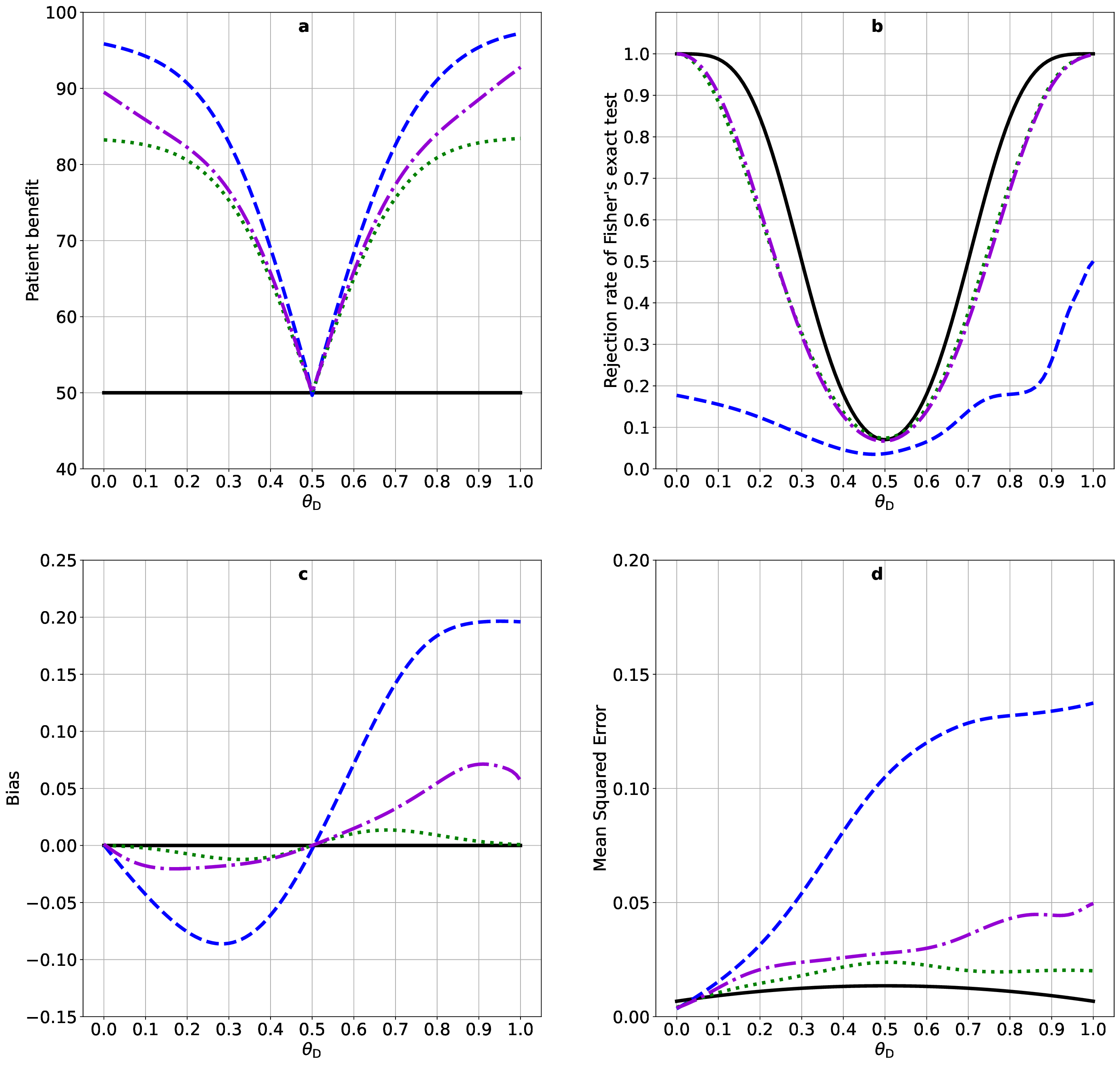}
	\caption{Patient benefit (subfigure~a), rejection rate (subfigure~b),  bias (subfigure~c), and mean squared error (subfigure~d) vs.~$\theta_\D$ for~$\theta_\C=0.5$,~$n=75$,~$\alpha^* = 0.05$,~$\beta= 0.4$, and RA~procedures ER (solid), DP (dashed), CRDP (dotted) and CMDP-T (dash-dotted)}\label{results_N75}
\end{figure}
\FloatBarrier

 \begin{figure}[htb]
		\includegraphics[width=\linewidth]{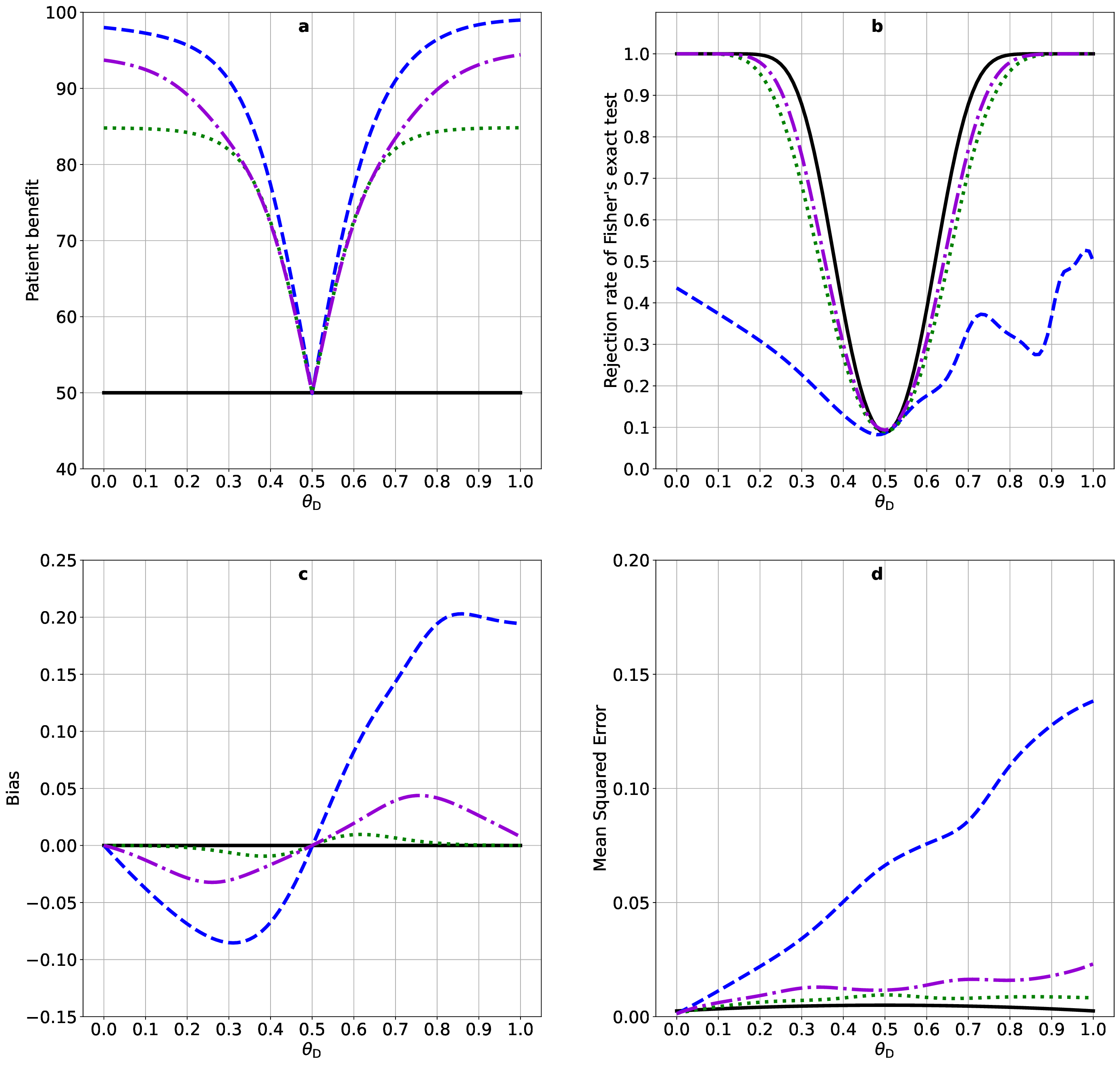}
	\caption{Patient benefit (subfigure~a), rejection rate (subfigure~b),  bias (subfigure~c), and mean squared error (subfigure~d) vs.~$\theta_\D$ for~$\theta_\C=0.5$,~$n=200$,~$\alpha^* = 0.07$,~$\beta= 0.23$, and RA~procedures ER (solid), DP (dashed), CRDP (dotted) and CMDP-T (dash-dotted)}\label{results_N200}
\end{figure}
\FloatBarrier

Figures~\ref{results_N75_det} and \ref{results_N200_det} in Appendix~\ref{app:plots_type_1_power} show the OCs according to the comparison described above when~$p=1.00$ for CMDP-T instead of~$0.95$, i.e., when a DRA CMDP-T procedure is used. 
Figures~\ref{results_N75_det} and \ref{results_N200_det} show that for the DRA CMDP-T procedure, the patient~benefit increases and the type~I~error and power remain under control, while the bias and MSE are worse than for CMDP-T with~$p=0.95$ and take on values similar to those for DP when~$\theta_\D\gg \theta_\C.$ Hence, the randomization incorporated in the CMDP-T procedure with~$p=0.95$ mitigates a large part of the bias and MSE. Figures~\ref{results_N75_25} - \ref{results_N200_75} in Appendix~\ref{app:plots_type_1_power} show the OCs according to the comparison above when~$p\in\{0.95,1.00\}$,~$n\in\{75,200\}$, for several values of~$\theta_\D$, and~$\theta_\C\in\{0.25,0.75\}$. The behaviour of the power remains similar when varying~$p$ and~$\theta_\C$. When~$\theta_\C=0.25$, a uniform outperformance of CRDP by CMDP-T is no longer seen in terms of power as the rejection rate for CRDP is higher for~$\theta_\D\approx0.25$.
Note, however, that the rejection rate for CRDP is only significantly higher than that for CMDP-T for parameter values close to the null, where the rejection rate is low anyway. 

\subsection{Application 2: Control of estimation error}
Two CMDP-E policies, calculated using~\eqref{CMDP_EST} in Example~\ref{example: CMDP-E} and denoted~\mbox{CMDP-E1} and~CMDP-E2, are evaluated. For policy CMDP-E1, the values of~$\alpha^*,\beta, V_{\sigma}$ are chosen such that the MSE is similar to that of ER, while there is a gain in patient~benefit. 
For \mbox{CMDP-E1} we set~$\mathcal{S}=\{[0,1]^2\}$, i.e.,~$\mathcal{S}$ only contains the unit square, and we calculate the value of~\eqref{eqn:MSE} realised under the ER RA procedure and set~$V_\sigma$ to~$\xi_1$ times this value for~$\xi_1\in[1,\infty)$, we \mbox{set~$\alpha^*=\beta=1$,} i.e., the type I and II error constraints are automatically satisfied, hence we focus solely on MSE for the policy \mbox{CMDP-E1.} 
For CMDP-E2 we take~$[\theta_a^\ell, \theta_a^u)\in\{[0,0.25),[0.25,0.5),[0.5,0.75),[0.75,0.9),[0.9,1.0)\}$ for all~$a\in\{\C,\D\}$ in order to construct~$\mathcal{S}$, dividing the unit square in blocks with a surface~$0.25^2$, as well as some additional blocks \mbox{for~$\theta_\C\geq 0.9$} \mbox{or~$\theta_\D\geq 0.9$} where the largest imbalances in treatment group sizes occur (for RA procedures inducing high patient~benefit). We set~$V_\sigma$ equal to~$\xi_2$ times the value of~\eqref{eqn:MSE} realised under CRDP for~$\xi_2\in[1,\infty).$

First, we evaluate the results for CMDP-E for~$n=75$.
After evaluating several choices based on the OCs, we set~$\xi_1 = 1.05$. The prior parameters~$\tilde{s}_{a,0}, \tilde{f}_{a,0}$ for~$\Pi_0$ are set to 1, as are the parameters for the prior~$\Pi$.
Computation of the \mbox{CMDP-E1} policy took 28.2 seconds, and resulted in a relative optimality gap of~$8.18\cdot 10^{-7}$ and total expected reward~41.3 under the prior~$\Pi$, while computation of \mbox{the CMDP-E2 policy}, where we set~$\xi_2=1.00,$  \mbox{$\alpha^*=0.05,$} \mbox{$\beta = 0.4$,} took~708~seconds, and resulted in a relative optimality gap of~$4.59\cdot 10^{-4}$ and total expected reward~45.4 under the prior~$\Pi$ in comparison to~$\timeCRDPSMALLHOR$ seconds for CRDP with total expected reward~$\rewCRDPSMALLHOR$.
 Figure~\ref{fig:results_estimation_n75} shows the results.  Figure~\ref{fig:results_estimation_n75} shows that the CMDP-E1~procedure induces higher patient~benefit than the ER RA procedure, while having similar power and MSE. The  \mbox{CMDP-E1} procedure induces higher patient~benefit when~$\theta_\D\leq 0.5$ in comparison to when~$\theta_\D\geq 0.5.$
 This behaviour could be explained by the fact that allocating more patients to the arm with the highest variance (in the spirit of Neyman allocation~\citep{rosenberger2004}), which is always the control arm in this evaluation, maximizes power and hence induces good OCs. 
Figure \ref{fig:results_estimation_n75} shows that \mbox{the~CMDP-E2~procedure} induces similar patient~benefit, power and MSE to CRDP, while showing slightly higher bias. This result is curious, as the two policies are found using two completely different procedures, but could be explained by the fact that the parameters~$\ell$ and~$p$ for CRDP are tuned specifically to balance patient~benefit, power, bias, and MSE, while the constraints for CMDP-E2 are based on the attained MSE for CRDP. 

Second, we evaluate the results for~$n=200$.
For CMDP-E1, we set~$\xi_1 = 1.1$. The prior parameters~$\tilde{s}_{a,0}, \tilde{f}_{a,0}$ for~$\Pi_0$ are set to 1, as are the parameters for the prior~$\Pi$.
Computation of the  \mbox{CMDP-E1} policy took 137 seconds, and resulted in a relative optimality gap of~$3.98\cdot 10^{-7}$ and total expected reward 113 under the prior~$\Pi$, while computation of the CMDP-E2 policy, where we set~$\xi_2=1.05,\;$\mbox{$\alpha^*=0.07$},~$\beta =0.753$, took 190$\cdot 10^2$ seconds, and resulted in a relative optimality gap of~$2.75\cdot 10^{-4}$ and total expected reward 123 under the prior~$\Pi$ in comparison to~$\timeCRDPhighHOR$ seconds for CRDP with total expected reward~$\rewCRDPhighHOR$.
 Figure \ref{fig:results_estimation_n200} shows the results.  Figure~\ref{fig:results_estimation_n200} shows that the \mbox{CMDP-E1}~procedure again induces higher patient~benefit than the~ER RA procedure while having similar power and MSE, while inducing higher patient~benefit \mbox{when~$\theta_\D\leq 0.5$} in comparison to \mbox{when~$\theta_\D\geq 0.5.$}
Figure~\ref{fig:results_estimation_n200} shows that the \mbox{CMDP-E2~procedure} induces similar power and MSE to CRDP, while showing slightly higher bias and slightly higher patient~benefit. 

In conclusion, for~$n=200$ the CMDP-E2~procedure   outperforms CRDP in terms of patient~benefit, while showing similar MSE and power and slightly higher bias, which can possibly be mitigated by bias reduction techniques~\citep{Bowden2017unbiased}. Such procedures can however also lead to higher variance of the treatment effect estimator. The outperformance of CRDP by the  \mbox{CMDP-E1} procedure in terms of patient benefit is not as high as in the previous application, indicating that CRDP is a well-performing policy for a vast range of parameter values and OCs, while a CMDP procedure such as CMDP-T may outperform CRDP for specific settings/OCs. 

Figures~\ref{results_N75_det_estimation} and \ref{results_N200_det_estimation} in Appendix~\ref{app:plots_type_1_power} show the OCs according to the comparison described above when~$p=1.00$  instead of~$0.95$ for  \mbox{CMDP-E1} and CMDP-E2, i.e., when a DRA version of  \mbox{CMDP-E1} and CMDP-E2 is used. 
Figures~\ref{results_N75_det_estimation} and \ref{results_N200_det_estimation} show very similar measures to the  \mbox{CMDP-E1} and CMDP-E2 RA procedures for~$p=0.95$, in contrast to the comparison with~$p=0.95$ vs.~$p=1.00$ for CMDP-T, where large differences in bias and MSE were seen. The RAR procedures  \mbox{CMDP-E1} and \mbox{CMDP-E2 ($p=0.95$)} are preferred over the DRA procedures  \mbox{CMDP-E1} and \mbox{CMDP-E2 ($p=1.00$)} as the patient~benefit is similar, while using an RAR policy brings advantages, e.g., in terms of selection bias mitigation. 
Figures~\ref{results_N75_25_estimation} - \ref{results_N200_75_estimation} in Appendix~\ref{app:plots_type_1_power} show the OCs according to the comparison above when~$p\in\{0.95,1.00\}$,~$n\in\{75,200\}$, and~$\theta_\C\in\{0.25,0.75\}$. 
Figures~\ref{results_N75_25_estimation} - \ref{results_N200_75_estimation} show that the difference in the power and patient benefit between CRDP and CMDP-E2 remains small when varying~$p$ and~$\theta_\C$, while larger differences are seen between the bias and MSE of CRDP and CMDP-E2 for~$\theta_C\in\{0.25,0.75\}$ and~$n=200$.

\FloatBarrier
 \begin{figure}[htb]
		\includegraphics[width=\linewidth]{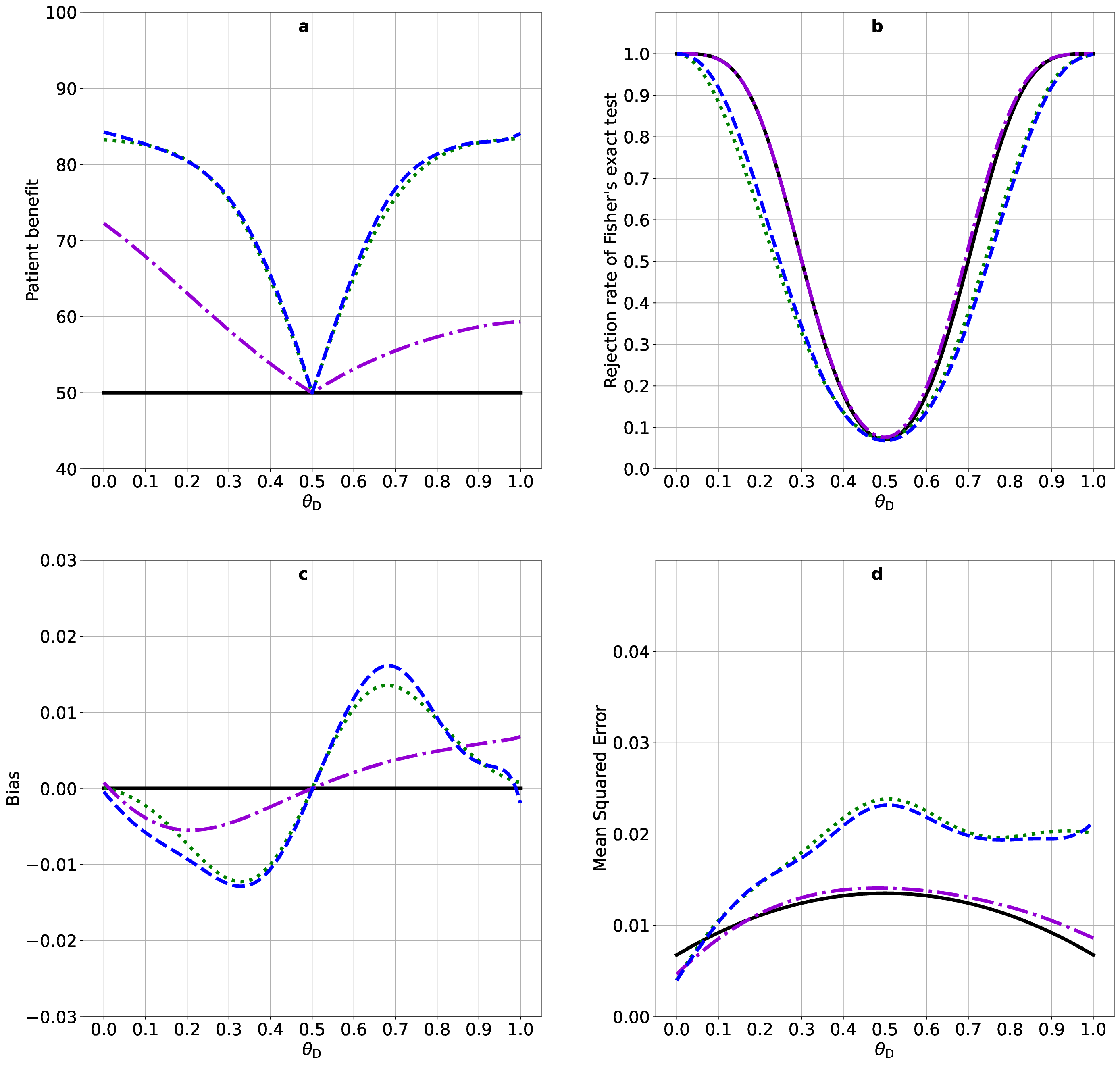}

	\caption{Patient benefit (subfigure~a), rejection rate (subfigure~b),  bias (subfigure~c), and mean squared error (subfigure~d) vs.~$\theta_\D$ for~$\theta_\C=0.5$,~$n=75$, and RA~procedures ER (solid), CMDP-E2 (dashed), CRDP (dotted) and  \mbox{CMDP-E1} (dash-dotted)}\label{fig:results_estimation_n75}
\end{figure}
\FloatBarrier
    
\FloatBarrier
 \begin{figure}[htb]
\includegraphics[width=\linewidth]{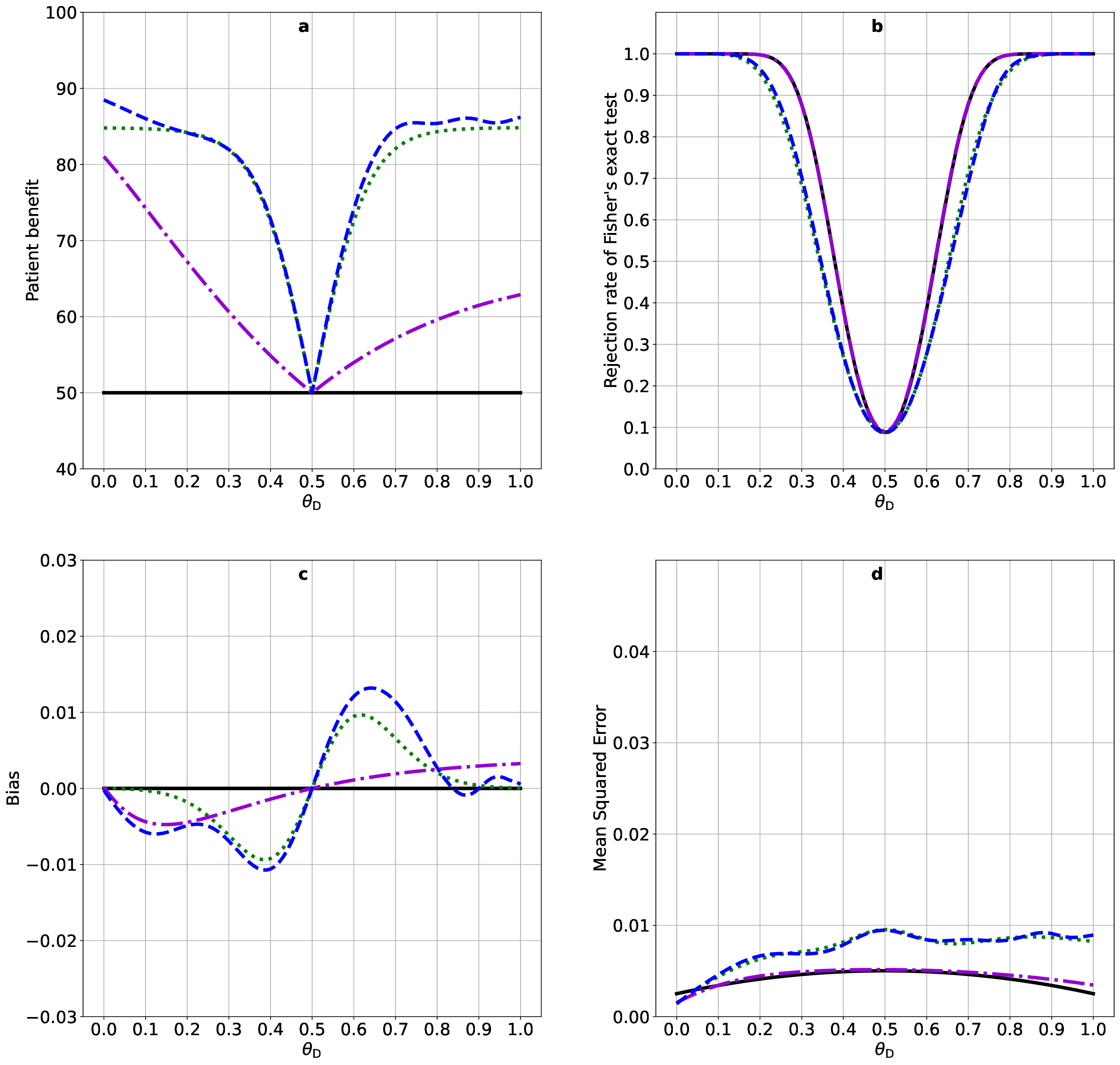}	
	\caption{Patient benefit (subfigure~a), rejection rate (subfigure~b),  bias (subfigure~c), and mean squared error (subfigure~d) vs.~$\theta_\D$ for~$\theta_\C=0.5$,~$n=200$, and RA~procedures ER (solid), CMDP-E2 (dashed), CRDP (dotted) and  \mbox{CMDP-E1} (dash-dotted)}\label{fig:results_estimation_n200}
\end{figure}
\FloatBarrier

\subsection{Application 3: Robustness to prior misspecification}
First, we evaluate CMDP-R (Example~\ref{example:CMDP-R}) for~$n=200$ and~$\tilde{s}_{\C,0}=3$,~$\tilde{f}_{\C,0}=7$, \mbox{$\tilde{s}_{\D,0}=6$},~$\tilde{f}_{\D,0}=4$,  hence under the informative prior the control and development treatment have a prior probability of success of 0.3 and 0.6, with a prior sample size of~10. The prior~$\Pi_\text{LI}$ corresponds to an independent uniform prior for both success probabilities. We do not consider CRDP in this section as this RA procedure was not constructed with sensitivity to prior misspecification in mind, hence, we will also not present results for~$n=75$ as in~\citet{williamson2017bayesian}. In fact, to the best of our knowledge, we are the first to introduce a BRA procedure focusing on robustness against prior misspecification.

To investigate different settings of robustness to the prior for the CMDP-R, we compare policies found under~\eqref{CMDP_ROBUST} for~$\xi = 0.00,\;0.990,\;0.999$ and~$1.00$ according to the OCs, which took 4.84, 123, 1260, and  3.34 seconds to compute, with optimality \mbox{gaps 0.00,~$4.87\cdot 10^{-3}$,~$4.10\cdot 10^{-3}$}, and 0.00, and optimal values 118,  117, 117, and 116. Figure \ref{results_priorrobust_ess10} shows the patient benefit, RR,  bias, and MSE for the four different CMDP-R policies, where we now set~$\theta_\C=0.3$ and again vary~$\theta_\D$ to investigate prior robustness. Figure \ref{results_priorrobust_ess10} shows that the risk under prior misspecification decreases with~$\xi$, as the patient benefit increases for~$\theta_\D<0.3$ in~$\xi.$ The ordering of patient benefit flips after~$\theta_\D$ passes the value~$0.3$, i.e., depending on whether the prior rightly \mbox{specifies~$\theta_\D\geq \theta_\C$,} and the
 patient benefit curve becomes symmetric \mbox{when~$\xi = 1.0.$} As a higher amount of allocations to the optimal treatment means a larger imbalance in treatment group sizes, the reverse ordering in quality, flipped after~$\theta_\D$ passes~$0.3$, is seen for the power and MSE, where this behaviour is seen to occur more drastically for the power, where the ordering flips exactly at~$\theta_\D=0.3.$  
 Comparing the bias curves, the curves become more symmetrical around~$0.3$ when~$\xi$ increases. This is because the robustness constraint enforces that the policy depends less on the informative prior when~$\xi$ increases, hence the observations tend to guide the policy more when~$\xi~$ is close to~$1,$ inducing a change in sign around~$\theta_\D=0.3$ and a symmetric behaviour for the policy in both arms, leading to a more symmetric looking bias.

Second, we evaluate the results for~$n=200$ and~$\tilde{s}_{\C,0}=30$,~$\tilde{f}_{\C,0}=70$, \mbox{$\tilde{s}_{\D,0}=60$},~$\tilde{f}_{\D,0}=40$,   hence under the informative prior the control and developmental treatment have a prior probability of success of 0.3 and 0.6, with a prior sample size of 100. The less informative prior was again chosen as a uniform prior for both success probabilities. 
We compare policies found for~\eqref{CMDP_ROBUST} for~$\xi = 0.00,\;0.900,\;0.990$ and~$1.00$, which took 3.00,  165,  171, and   2.95 seconds to compute, with optimality gaps~0.00,~$5.60\cdot 10^{-3}$,~$3.09\cdot 10^{-3}$, and~0.00, and optimal values 117,  117, 117, and~115.
 Figure \ref{results_priorrobust_ess100} shows the patient benefit, RR,  bias, and MSE for the four different CMDP-R policies, where we again set~$\theta_\C=0.3$ to investigate prior robustness. Figure \ref{results_priorrobust_ess100} again shows that the risk under prior misspecification decreases with~$\xi$, where the ordering of patient benefit flips after~$\theta_\D$ passes the value~$0.3$ and
 patient benefit curve becomes symmetric when~$\xi = 1.0$. The risk of prior misspecification is higher in this case, as the patient benefit stays around~$5\%$ even for~$\theta_\D=0.2,$ for~$\xi = 0.0$.  The flip in the ordering for~$\theta_\D<0.3$ vs.~$\theta_\D\geq 0.3$ is seen for the curves corresponding to~$\xi>0$, while the curve for~$\xi =0.0$ shows a different behaviour. Due to the large amount of prior certainty, the policy exploits knowledge on~$\theta_\D>\theta_\C$ for a large part of the parameter space, effectively acting as a fixed randomization policy with a probability of~$95\%~$ of allocating to the control arm. This is why there is low bias and roughly constant MSE for~$\theta_\D>0.3$. As the treatment groups are extremely unbalanced, the power is low and the MSE is high for such values of~$\theta_\D.$ As exploration is seen to kick in for CMDP-R with~$\xi=0.0$ when~$\theta_\D\ll0.3$, the treatment groups become more balanced when~$\theta_\D$ is close to zero, resulting in an increase in power. 

Note that, rewriting~\eqref{Eqn: minmaxV}, we have
$$L(\lambda) = \max_{\pi}\mathbb{E}^\pi\left[\sum_{t=0}^{n}{r}(\bX_t, \delta_t)\right] + \lambda \mathbb{E}^\pi_\text{LI}\left[\sum_{t=0}^{n}{r}_\text{LI}(\bX_t, \delta_t)\right] - \lambda v_\text{LI}\xi.~$$
The two expectations above can be rewritten to the expected sum of successes under a mixture prior of~$\Pi$ and~$\Pi_\text{LI}$, so an alternative method would be to formulate a mixture prior and optimise the sum of rewards under this mixture prior. The difference with the CMDP-R approach is that~\eqref{CMDP_ROBUST} gives a robustness guarantee of the resulting policy, something that might be of value when designing a clinical trial. Furthermore, one might easily increase the number of constraints in~\eqref{CMDP_ROBUST}, while adding more priors to the mixture would quickly make the procedure intractable as all weights need to be elicited beforehand.
 
\FloatBarrier
 \begin{figure}[htb]
		\includegraphics[width=\linewidth]{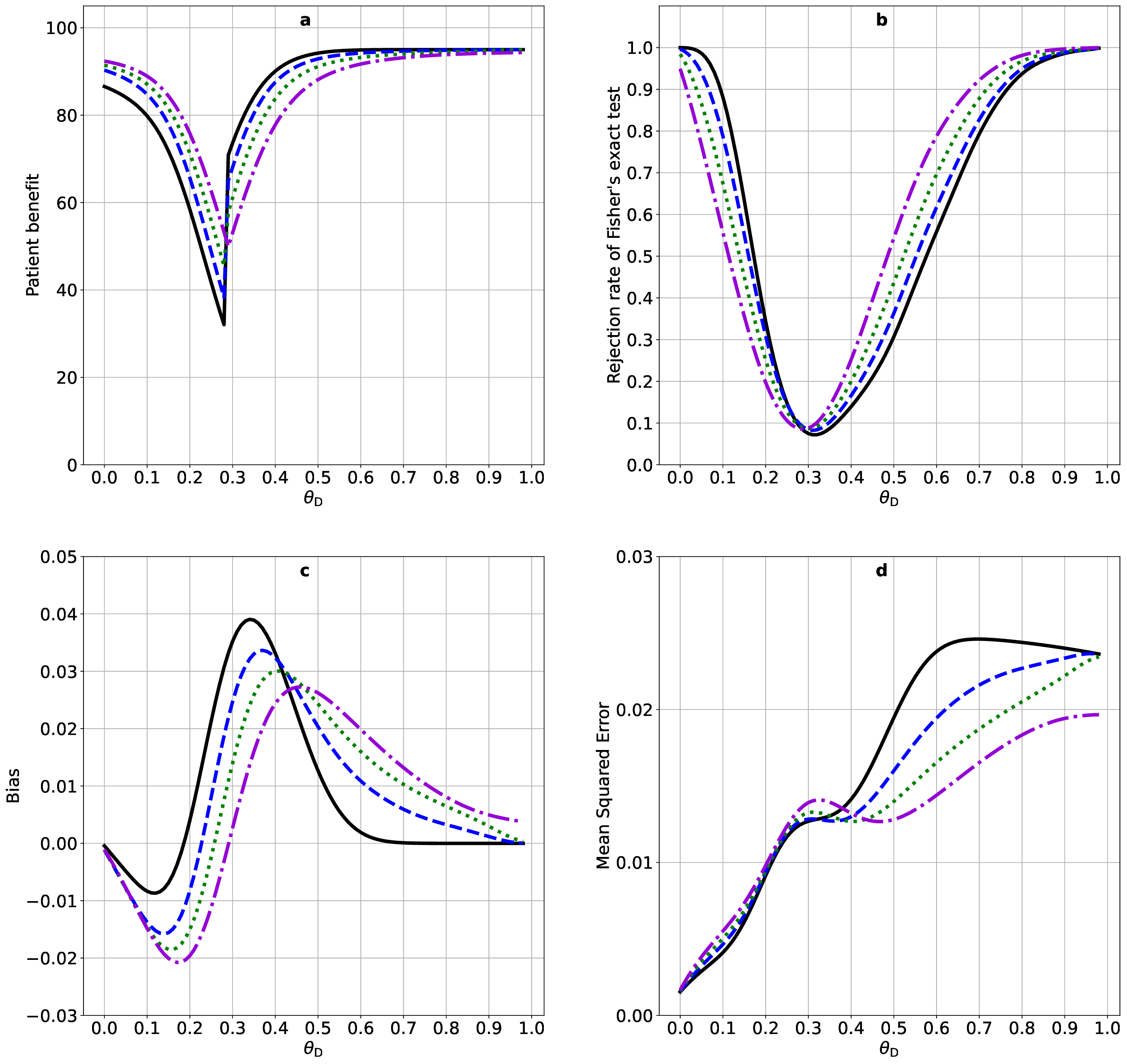}
		
	\caption{Patient benefit (subfigure~a), rejection rate (subfigure~b),  bias (subfigure~c), and mean squared error (subfigure~d) vs.~$\theta_\D$ for~$\theta_\C=0.3$,~$n=200$,~$\tilde{s}_{1,0}=3$,~$\tilde{f}_{1,0}=7$,~$\tilde{s}_{2,0}=6$,~$\tilde{f}_{2,0}=4$ and the \mbox{CMDP-R procedure}~\eqref{CMDP_ROBUST} with~$\xi=0.00,0.990,0.999,1.00$ denoted by the solid, dashed, dotted and dash-dotted lines respectively. }\label{results_priorrobust_ess10}
\end{figure}

 \begin{figure}[htb]
		\includegraphics[width=\linewidth]{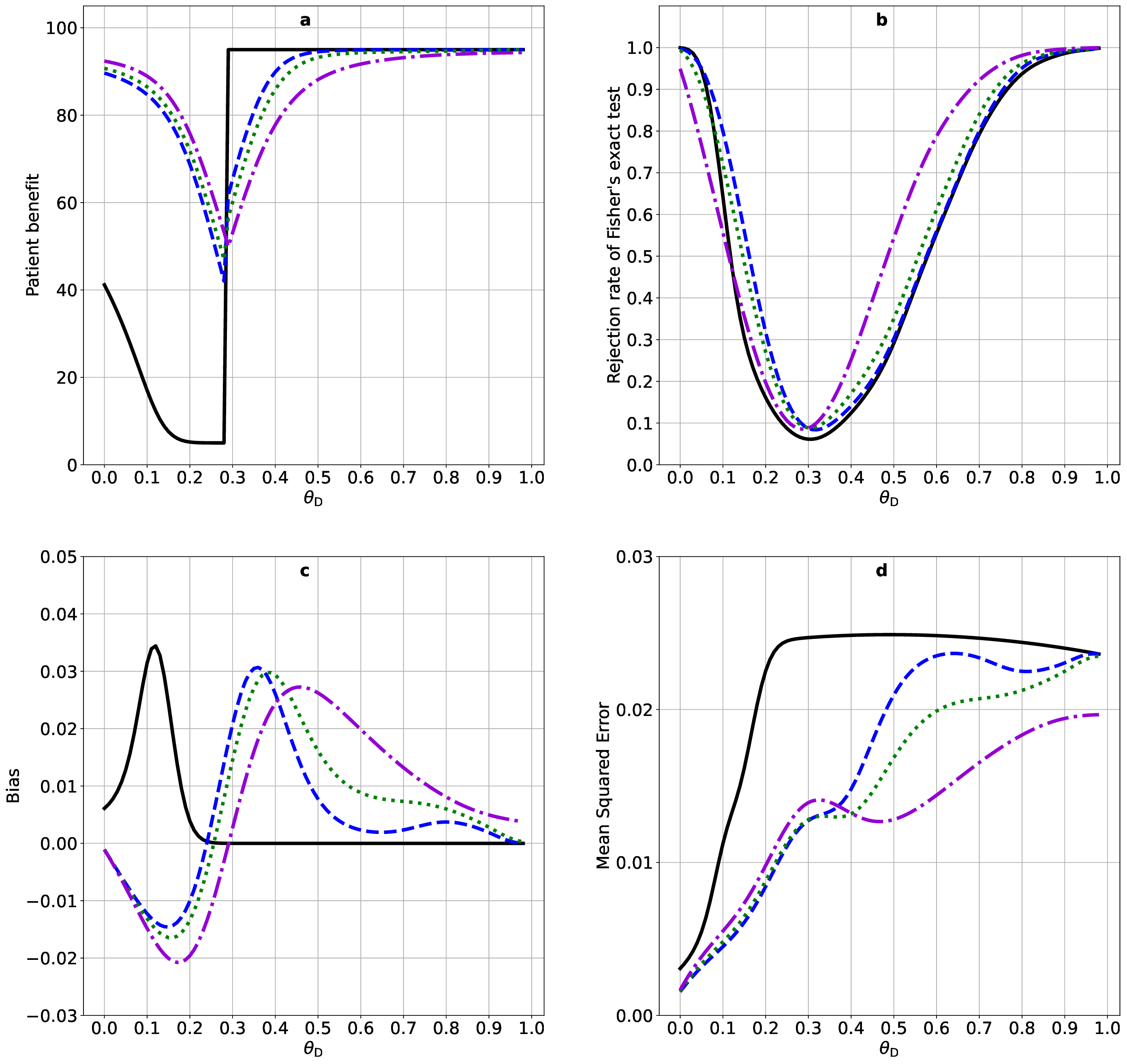}

	\caption{Patient benefit (subfigure~a), rejection rate (subfigure~b),  bias (subfigure~c), and mean squared error (subfigure~d) vs.~$\theta_\D$ for~$\theta_\C=0.3$,~$n=200$,~$\tilde{s}_{1,0}=30$,~$\tilde{f}_{1,0}=70$,~$\tilde{s}_{2,0}=60$,~$\tilde{f}_{2,0}=40$ and the \mbox{CMDP-R procedure}~\eqref{CMDP_ROBUST} with~$\xi=0.00,0.990,0.999,1.00$ denoted by the solid, dashed, dotted and dash-dotted lines respectively. }\label{results_priorrobust_ess100}
\end{figure}
\FloatBarrier

\section{Discussion }\label{sect:discussion}
The current paper introduced the class of constrained Markov decision process response-adaptive procedures. The constraints for the Markov decision process facilitate the construction of response-adaptive procedures with good OCs while prioritizing one operating characteristic in the trial (e.g., patient~benefit) by putting it in the objective. 

Three applications were presented in the current paper. In the first application, type~I~error and power constraints were formulated, and the resulting randomized response-adaptive procedure, CMDP-T, was shown to yield comparable power to the constrained randomized dynamic programming RA procedure introduced in~\citet{williamson2017bayesian} while inducing higher patient~benefit. Furthermore, it was shown that the deterministic version of  CMDP-T  also has the desired power properties, while inducing higher patient~benefit than the randomized CMDP-T procedure and constrained randomized dynamic programming. A deterministic procedure might however have certain non-desirable properties, e.g., allocation biases, but it is nonetheless curious that a deterministic policy with good power properties could be constructed.
 A topic of future research for CMDP-T could be to evaluate the performance of CMDP-T  for other tests such as tests for superiority or non-inferiority.

In the second application, we constructed two constrained Markov decision process procedures by formulating additional constraints on the MSE. The resulting procedures,  \mbox{CMDP-E1} and CMDP-E2, showed low MSE in addition to a similar power and type I error behaviour as seen for CMDP-T. Here, the priors under the constraints were designed in such a way that the constraint reflected the average behaviour of the policy on a specific part of the parameter space. 
The  \mbox{CMDP-E1} procedure showed higher patient~benefit than the non-adaptive equal allocation procedure while having similar MSE and power. The CMDP-E2 procedure showed OCs similar to constrained randomized dynamic programming, where only for a trial horizon of 200 participants, slight outperformance in patient~benefit by the constrained approach was seen, indicating the good performance and general applicability of the constrained randomized dynamic programming procedure.

In the third application, a constrained Markov decision process procedure, CMDP-R, was developed for prior robustness, where the objective is to maximize patient~benefit under an informative prior, with the restriction that the patient~benefit should also be sufficiently high under an uninformative prior on the success probabilities. It was shown that the constraint parameter~$\xi$ could be tuned such that the desired robustness to misspecification of prior information could be attained. To the best of our knowledge, ours is the first approach to incorporate robustness to prior misspecification directly into the design of the response-adaptive procedure.

Other applications might also be of interest to explore. For instance, in line with recent literature on higher moments and tail bounds of the regret distribution for multi-armed bandits~\citep{fan2023fragility}, it could be possible to define constraints on moments and tails of the number of suboptimal decisions made by the Markov decision process procedure, ensuring that the performance of the response-adaptive procedure in terms of patient~benefit is consistent over different sampled data sets. The constraints could also be used to constrain the allocations to both treatment groups to a minimum amount with high probability, as is the purpose of the penalty term in constrained randomized dynamic programming, and differences in OCs between such a constrained approach and constrained randomized dynamic programming could be investigated. 

In application 2 (where the focus was on the mean squared error), it was seen that in order to obtain a good performance in all OCs,  the parameters in the constraints had to be chosen in a specific manner. This is mainly because the constraints are Bayesian, hence they only guarantee a certain behaviour when averaged over a prior. In order to obtain good frequentist OCs in a more straightforward or automatic manner, it might be of interest to explore approaches more in line with robust optimisation, or to add a second layer of optimisation, guaranteeing good OCs for all possible parameter configurations. 

	In this paper, we proposed an algorithm based on Lagrange multipliers and backward recursion to calculate CMDP RA procedures. In our applications, we found that this algorithm yields a policy with a small relative optimality gap in all cases. However, there is no guarantee that this will always be the case. Namely, in settings where the amount of constraints is larger in comparison to the number of state-action pairs, this method might result in a highly suboptimal policy. Future research could focus on alternative solution methods that approximate the optimal solution to the CMPD problem in different ways, e.g., by using the linear programming formulation more directly.

For the sake of simplicity and tractability, the current paper considers a trial with two arms and binary outcomes. It would be interesting to extend the approach to more general settings, such as multiple trial stages, outcomes, and arms. 
As the computational effort for calculating constrained Markov decision process policies would quickly increase in such settings, it would be interesting to investigate whether approaches such as approximate dynamic programming or reinforcement learning could be (successfully) applied. 
It would furthermore be interesting to consider whether the constrained Markov decision process framework and results such as Lemma~\ref{lemma:datalikelihood_Bernoulli_RA} could be extended to different outcome types. One could think of a generalization of the model in Section~\ref{sect:model} along the lines in~\citet{YI2023125}, where a clinical trial with general outcome types was described using a Markov decision process. Furthermore, the assumption underlying the applicability of response-adaptive procedures is that the inter-arrival times of trial~participants are longer than the follow-up times. If this is not the case, the constrained Markov decision process needs to be able to deal with delayed responses. In~\citet{williamson2022generalisations}, the robustness of constrained randomized dynamic programming with respect to delayed responses was demonstrated. It would be interesting to investigate whether the same result can be shown for the constrained Markov decision process procedure. Lastly, as blocked RA procedures are often preferred in clinical trials~\citep{MERRELL2023107599},  it would furthermore be interesting to incorporate blocked allocation in the constrained Markov decision approach. 

The current paper provides theoretical results  and a method for finding an optimal policy for 
a class of finite-horizon constrained Markov decision processes that is more general than usually considered in the literature, as a different expectation operator, corresponding to a different prior belief, can be used in each constraint. Such optimisation problems arise naturally in settings where multiple hypotheses are considered for the treatment effect, e.g., when considering type~I~error control and power constraints. It is possible to also apply Lemma~\ref{lemma:datalikelihood_Bernoulli_RA} to multi-objective Markov decision processes, where each objective is formulated under a different prior.
In principle, Markov decision process response-adaptive procedures follow from a partially observable Markov decision framework, and hence it would be interesting to investigate whether there are other settings involving partial observability where the solution approach can be applied.

\FloatBarrier
\bibliographystyle{unsrtnat}

\bibliography{CMDP_for_RA_binary_arxiv}
\newpage
\appendix

\section{Additional plots}\label{app:plots_type_1_power}
\subsection{Application 1: Control of power and type~I~error}

 \begin{figure}[htb]
		\includegraphics[width=\linewidth]{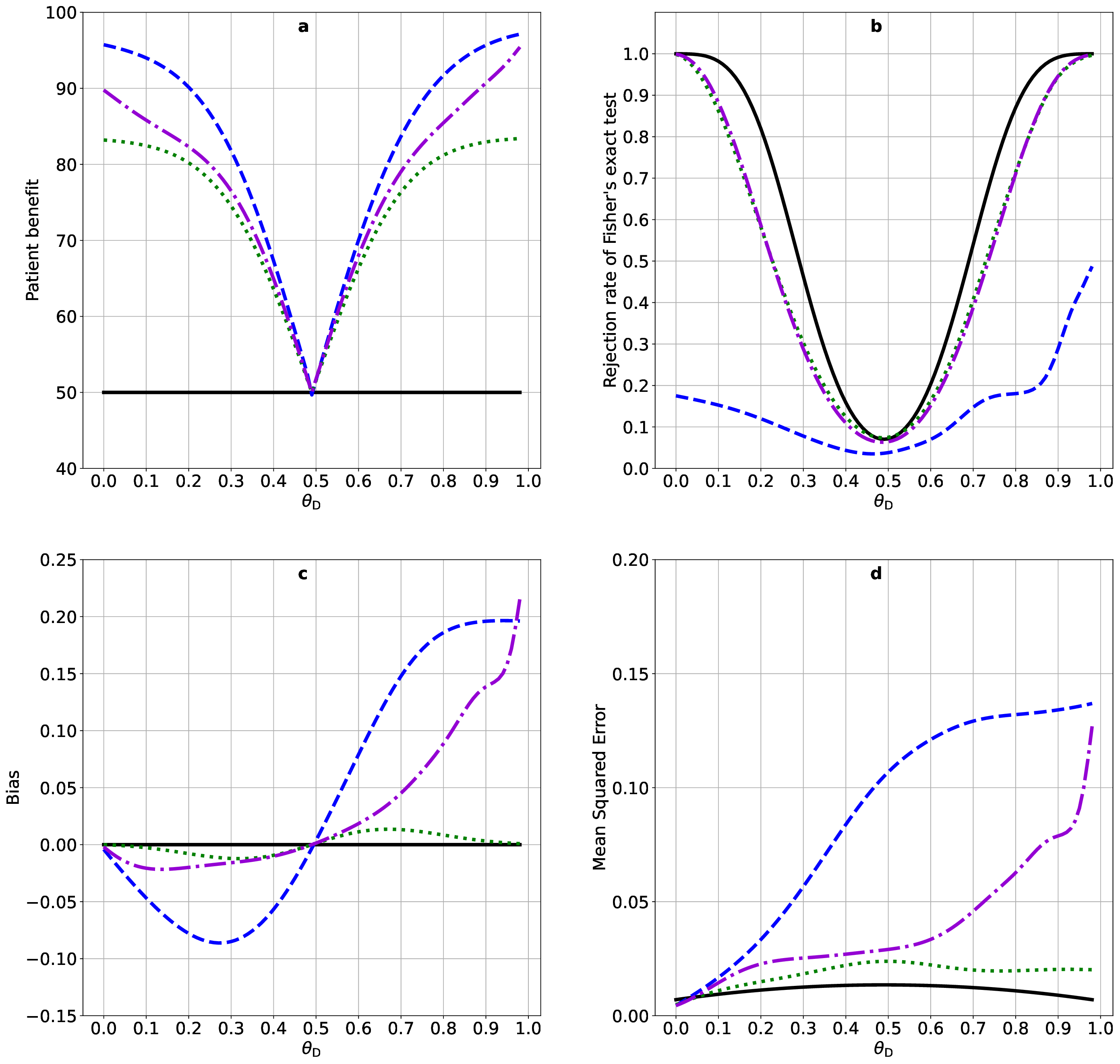}
		
	\caption{Patient benefit (subfigure~a), rejection rate (subfigure~b),  bias (subfigure~c), and mean squared error (subfigure~d) vs.~$\theta_\D$ for~$\theta_\C=0.5$,~$n=75$,~$\alpha^* = 0.05$,~$\beta= 0.4$, and RA~procedures ER (solid), DP (dashed), CRDP (dotted) and CMDP-T  with~$p=1.00$ (dash-dotted)  }\label{results_N75_det}
\end{figure}

 \begin{figure}[htb]
		\includegraphics[width=\linewidth]{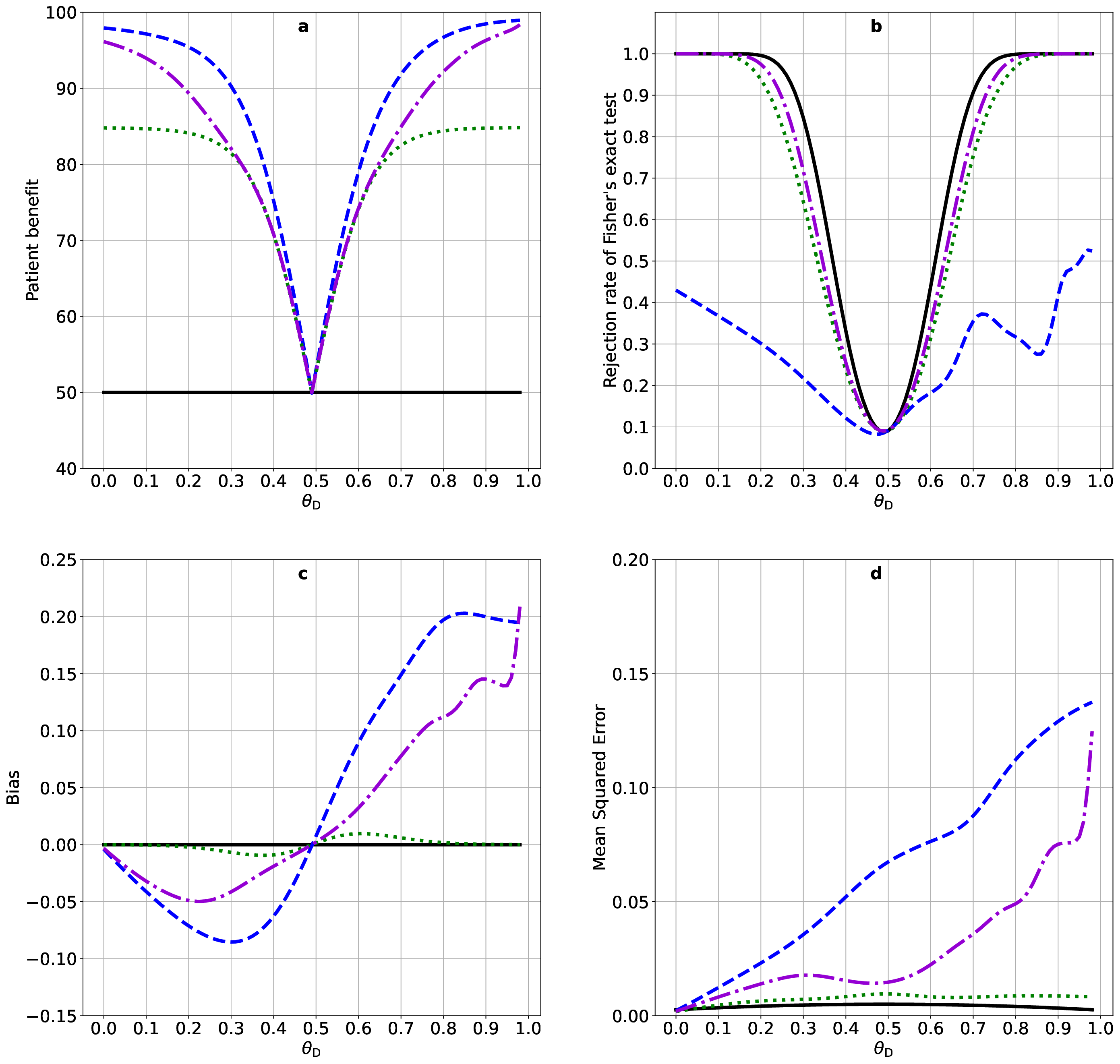}
		
	\caption{Patient benefit (subfigure~a), rejection rate (subfigure~b),  bias (subfigure~c), and mean squared error (subfigure~d) vs.~$\theta_\D$ for~$\theta_\C=0.5$,~$n=200$,~$\alpha^* = 0.07$,~$\beta= 0.23$, and RA~procedures ER (solid), DP (dashed), CRDP (dotted) and CMDP-T  with~$p=1.00$ (dash-dotted)  }\label{results_N200_det}
\end{figure}

 \begin{figure}[htb]
		\includegraphics[width=\linewidth]{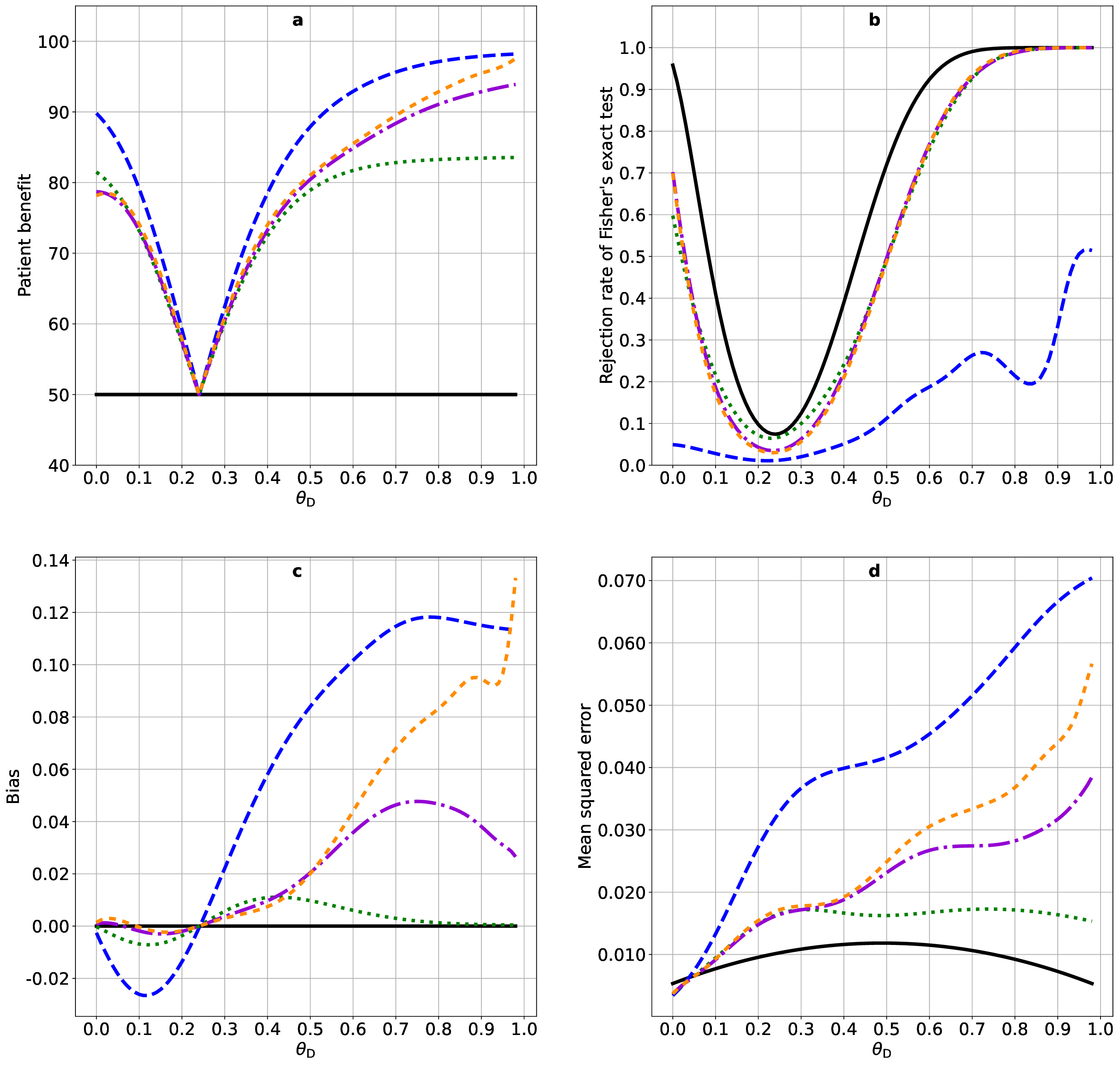}
		
	\caption{Patient benefit (subfigure~a), rejection rate (subfigure~b),  bias (subfigure~c), and mean squared error (subfigure~d) vs.~$\theta_\D$ for~$\theta_\C=0.25$,~$n=75$,~$\alpha^* = 0.05$,~$\beta= 0.4$, and RA~procedures ER (solid), DP (dashed, long), CRDP (dotted), CMDP-T with~$p=0.95$ (dash-dotted),  CMDP-T with~$p=1.00$  (dashed, short)}\label{results_N75_25}
\end{figure}

 \begin{figure}[htb]
		\includegraphics[width=\linewidth]{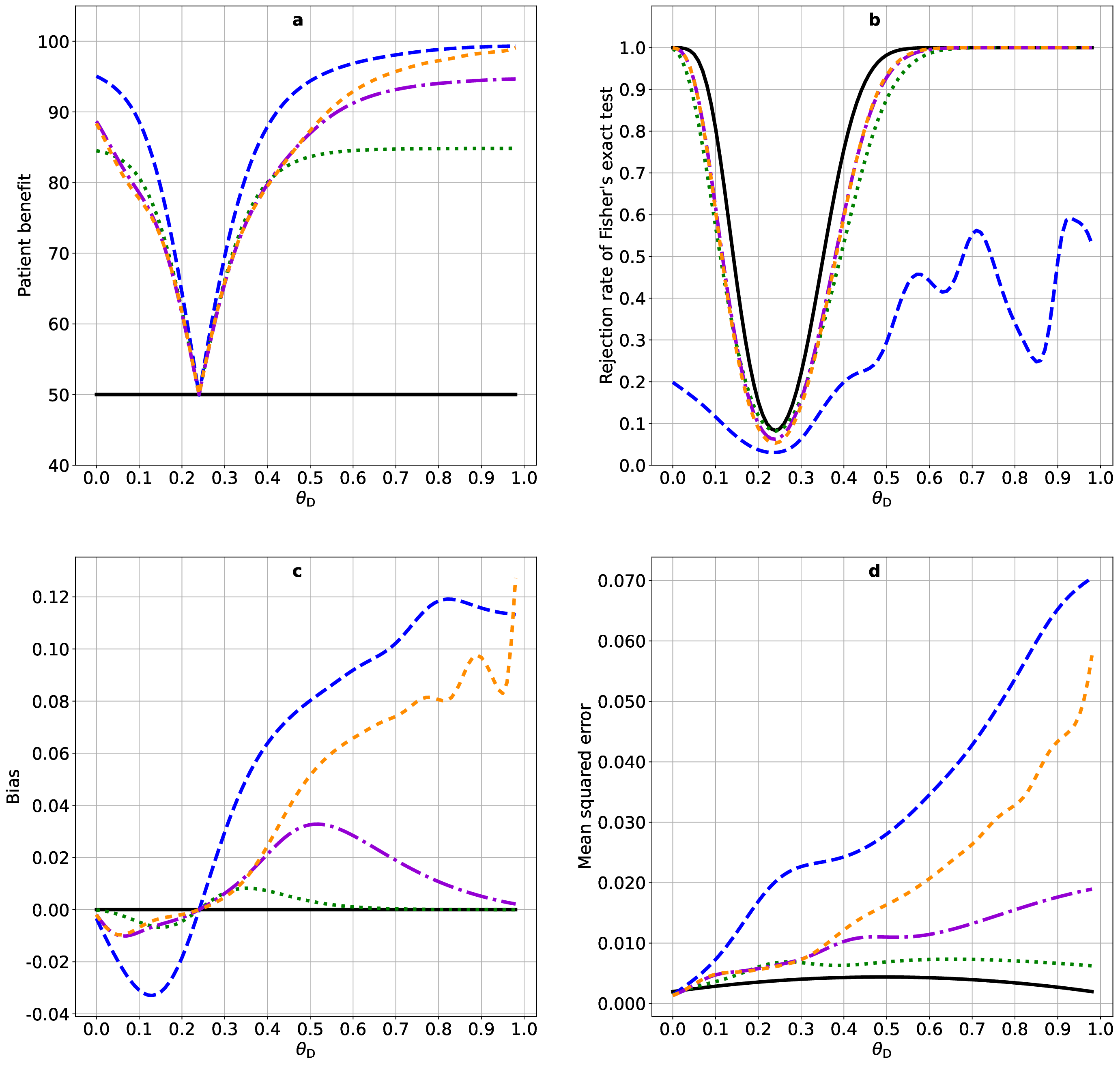}
		
	\caption{Patient benefit (subfigure~a), rejection rate (subfigure~b),  bias (subfigure~c), and mean squared error (subfigure~d) vs.~$\theta_\D$ for~$\theta_\C=0.25$,~$n=200$,~$\alpha^* = 0.07$,~$\beta= 0.23$, and RA~procedures ER (solid), DP (dashed, long), CRDP (dotted), CMDP-T with~$p=0.95$ (dash-dotted),  CMDP-T with~$p=1.00$  (dashed, short)}\label{results_N200_25}
\end{figure}

 \begin{figure}[htb]
		\includegraphics[width=\linewidth]{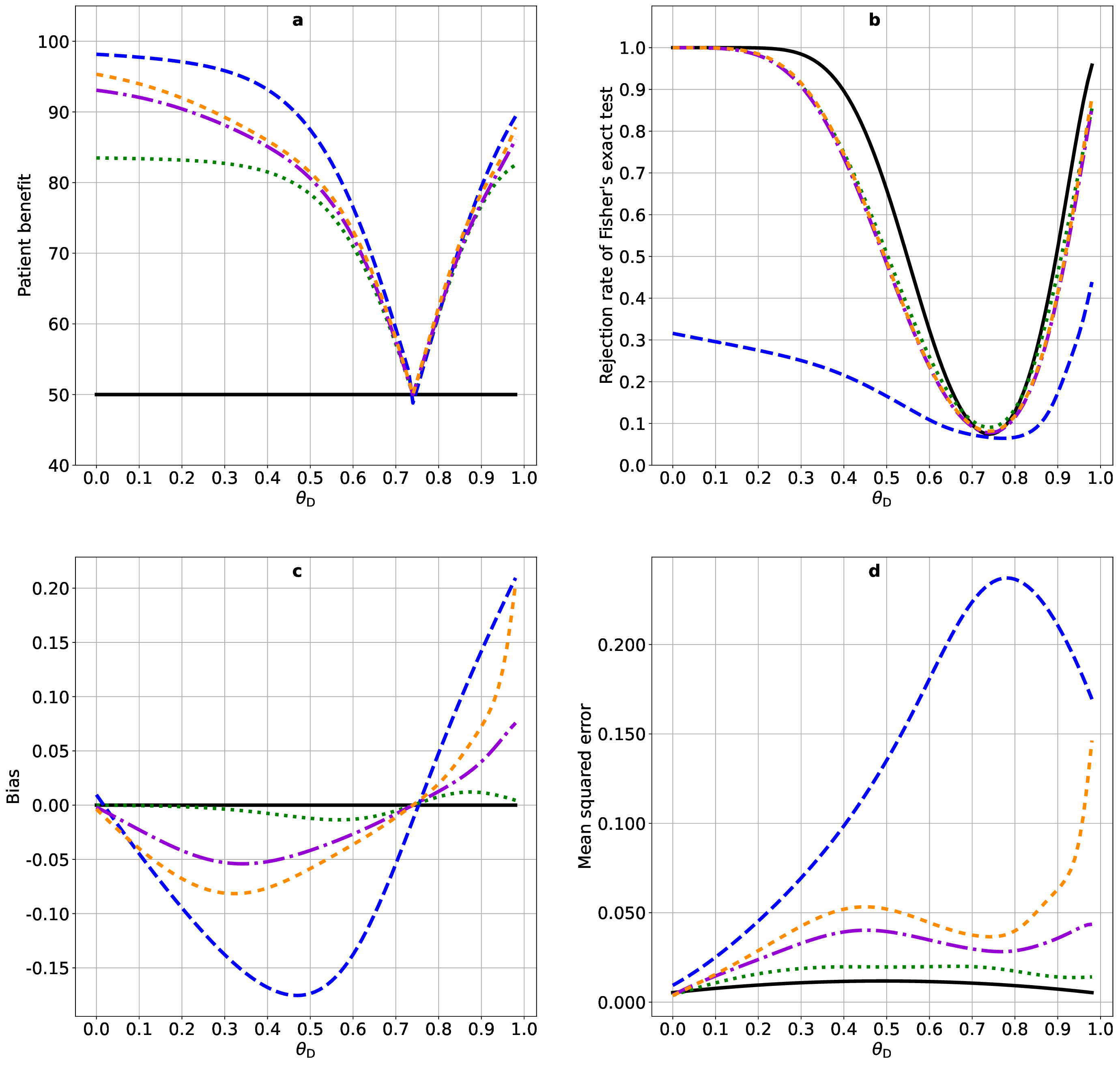}
		
	\caption{Patient benefit (subfigure~a), rejection rate (subfigure~b),  bias (subfigure~c), and mean squared error (subfigure~d) vs.~$\theta_\D$ for~$\theta_\C=0.75$,~$n=75$,~$\alpha^* = 0.05$,~$\beta= 0.4$, and RA~procedures ER (solid), DP (dashed, long), CRDP (dotted), CMDP-T with~$p=0.95$ (dash-dotted),  CMDP-T with~$p=1.00$  (dashed, short)}\label{results_N75_75}
\end{figure}

 \begin{figure}[htb]
		\includegraphics[width=\linewidth]{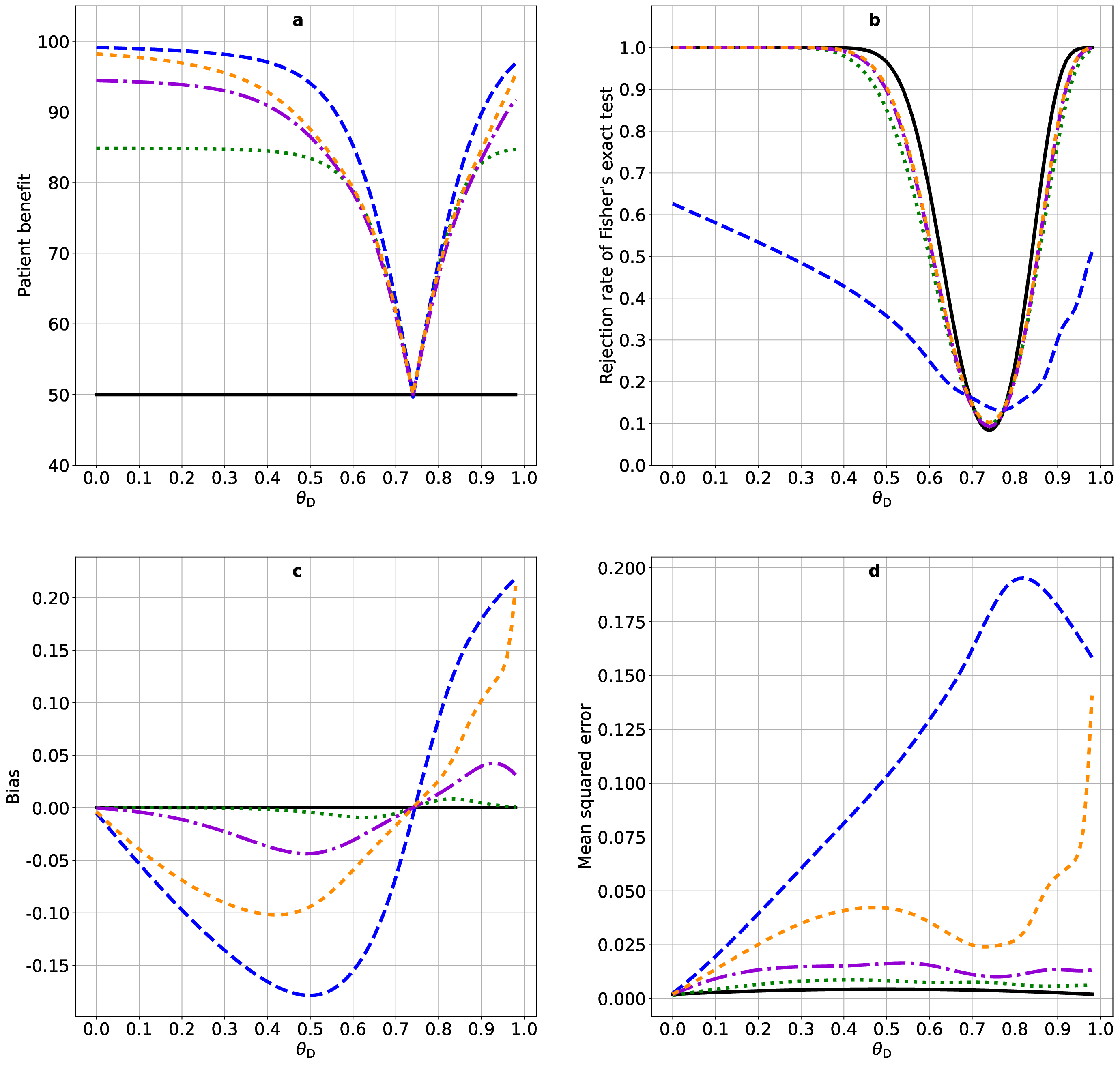}
	\caption{Patient benefit (subfigure~a), rejection rate (subfigure~b),  bias (subfigure~c), and mean squared error (subfigure~d) vs.~$\theta_\D$ for~$\theta_\C=0.75$,~$n=200$,~$\alpha^* = 0.07$,~$\beta= 0.23$, and RA~procedures ER (solid), DP (dashed, long), CRDP (dotted), CMDP-T with~$p=0.95$ (dash-dotted),  CMDP-T with~$p=1.00$  (dashed, short)}\label{results_N200_75}
\end{figure}
\FloatBarrier
\subsection{Application 2: Control of estimation error}
 \begin{figure}[htb]
		\includegraphics[width=\linewidth]{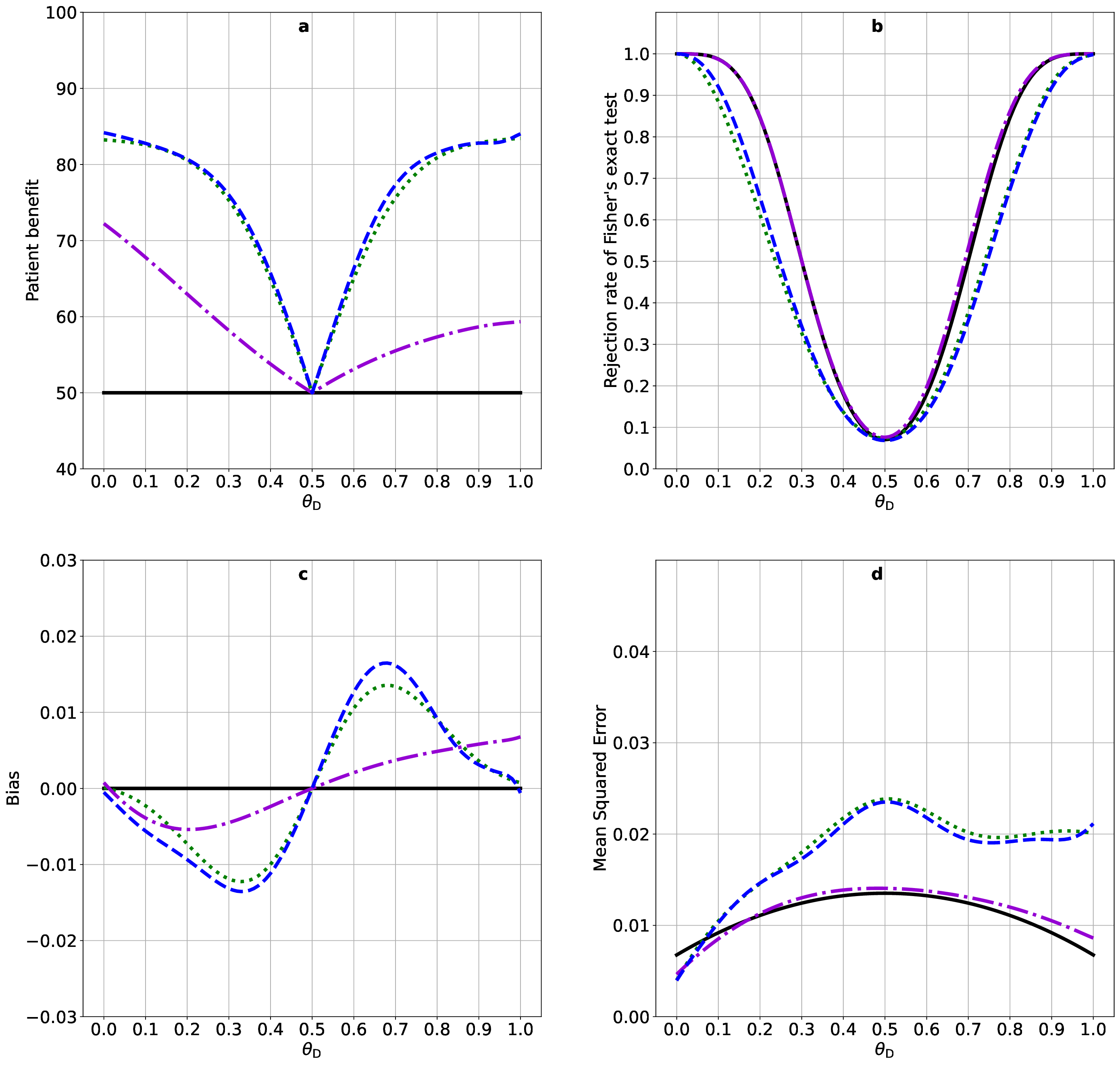}
		
	\caption{Patient benefit (subfigure~a), rejection rate (subfigure~b),  bias (subfigure~c), and mean squared error (subfigure~d) vs.~$\theta_\D$ for~$\theta_\C=0.5$,~$n=75$, and RA~procedures ER (solid), CMDP-E2 
with~$p=1.00$ (dashed), CRDP (dotted) and  \mbox{CMDP-E1} 
 with~$p=1.00$ (dash-dotted)}\label{results_N75_det_estimation}
\end{figure}

 \begin{figure}[htb]
		\includegraphics[width=\linewidth]{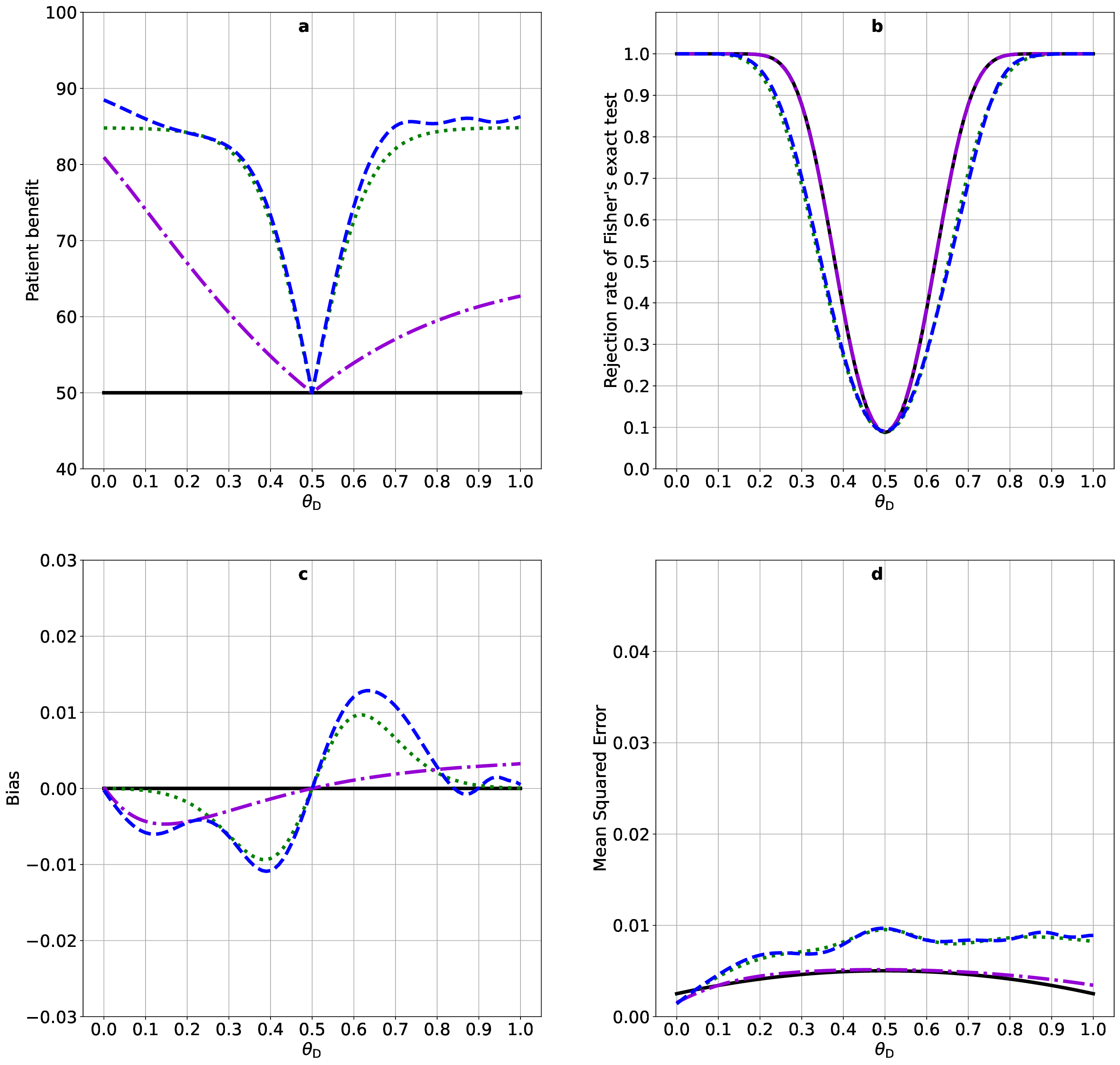}
	
	\caption{Patient benefit (subfigure~a), rejection rate (subfigure~b),  bias (subfigure~c), and mean squared error (subfigure~d) vs.~$\theta_\D$ for~$\theta_\C=0.5$,~$n=200$, and RA~procedures ER (solid), CMDP-E2 
with~$p=1.00$ (dashed), CRDP (dotted) and  \mbox{CMDP-E1} 
 with~$p=1.00$ (dash-dotted)}\label{results_N200_det_estimation}
\end{figure}

 \begin{figure}[htb]
		\includegraphics[width=\linewidth]{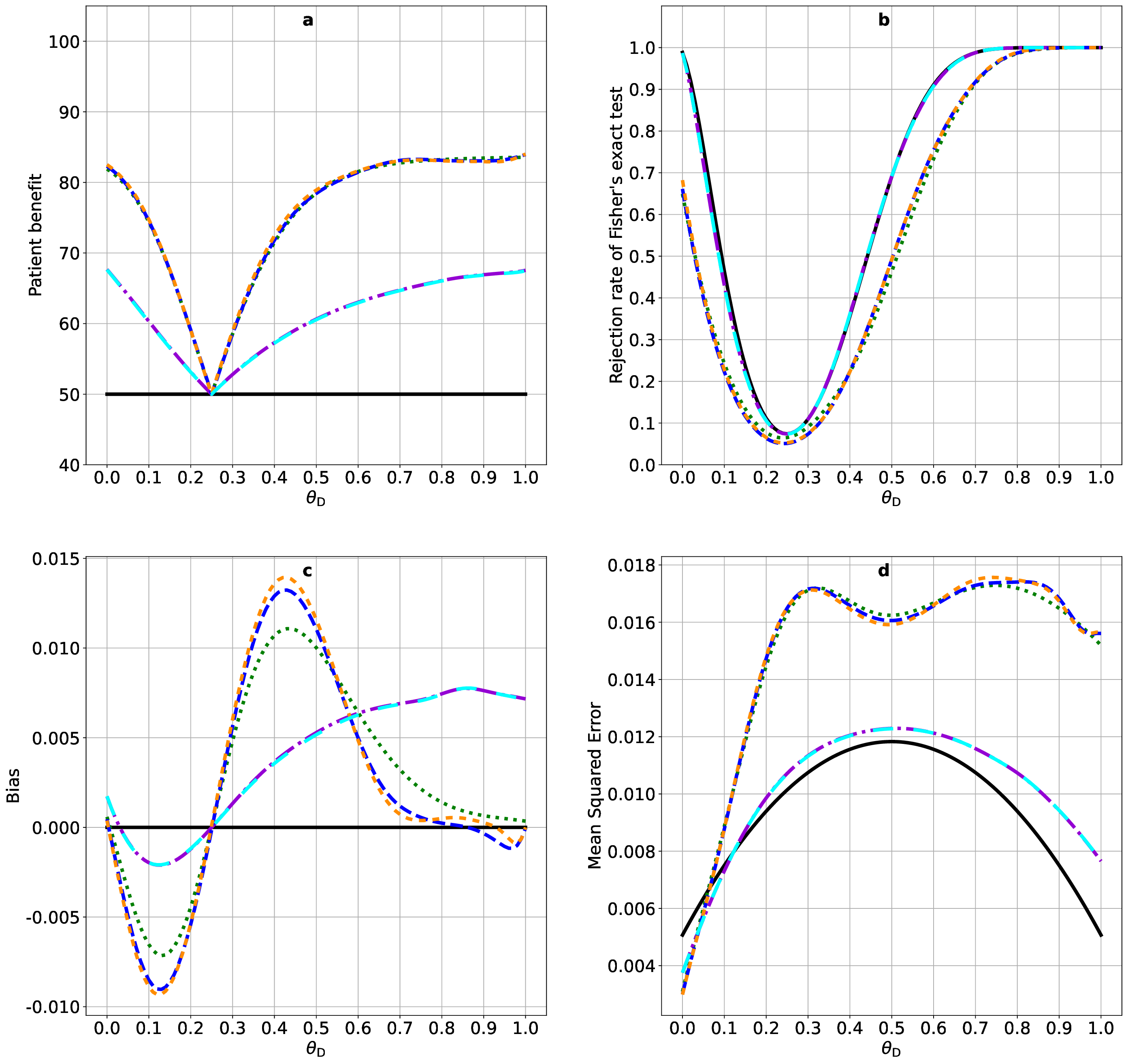}
	
	\caption{Patient benefit (subfigure~a), rejection rate (subfigure~b),  bias (subfigure~c), and mean squared error (subfigure~d) vs.~$\theta_\D$ for~$\theta_\C=0.25$,~$n=75$, and RA~procedures ER (solid), CMDP-E2 
with~$p=0.95$ (dashed, moderate), CMDP-E2 
with~$p=1.00$ (dashed, short), CRDP (dotted),  \mbox{CMDP-E1} 
with~$p=0.95$ (dash dot), CMDP-E2 
with~$p=1.00$ (dashed, long)}\label{results_N75_25_estimation}
\end{figure}

 \begin{figure}[htb]
		\includegraphics[width=\linewidth]{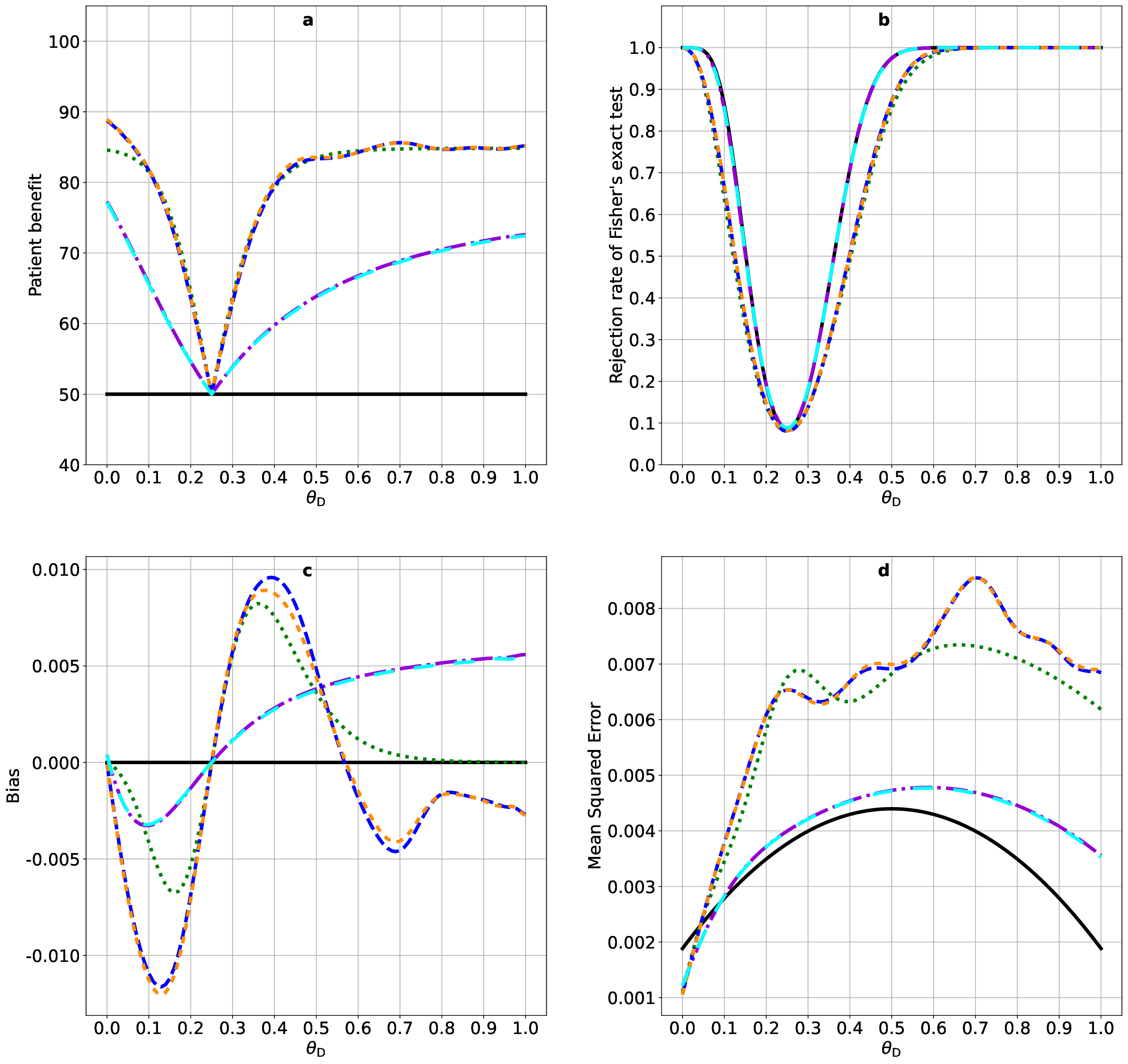}
		
	\caption{Patient benefit (subfigure~a), rejection rate (subfigure~b),  bias (subfigure~c), and mean squared error (subfigure~d) vs.~$\theta_\D$ for~$\theta_\C=0.25$,~$n=200$, and RA~procedures ER (solid), CMDP-E2 
with~$p=0.95$ (dashed, moderate), CMDP-E2 
with~$p=1.00$ (dashed, short), CRDP (dotted),  \mbox{CMDP-E1} 
with~$p=0.95$ (dash dot), CMDP-E2 
with~$p=1.00$ (dashed, long)}\label{results_N200_25_estimation}
\end{figure}

 \begin{figure}[htb]
		\includegraphics[width=\linewidth]{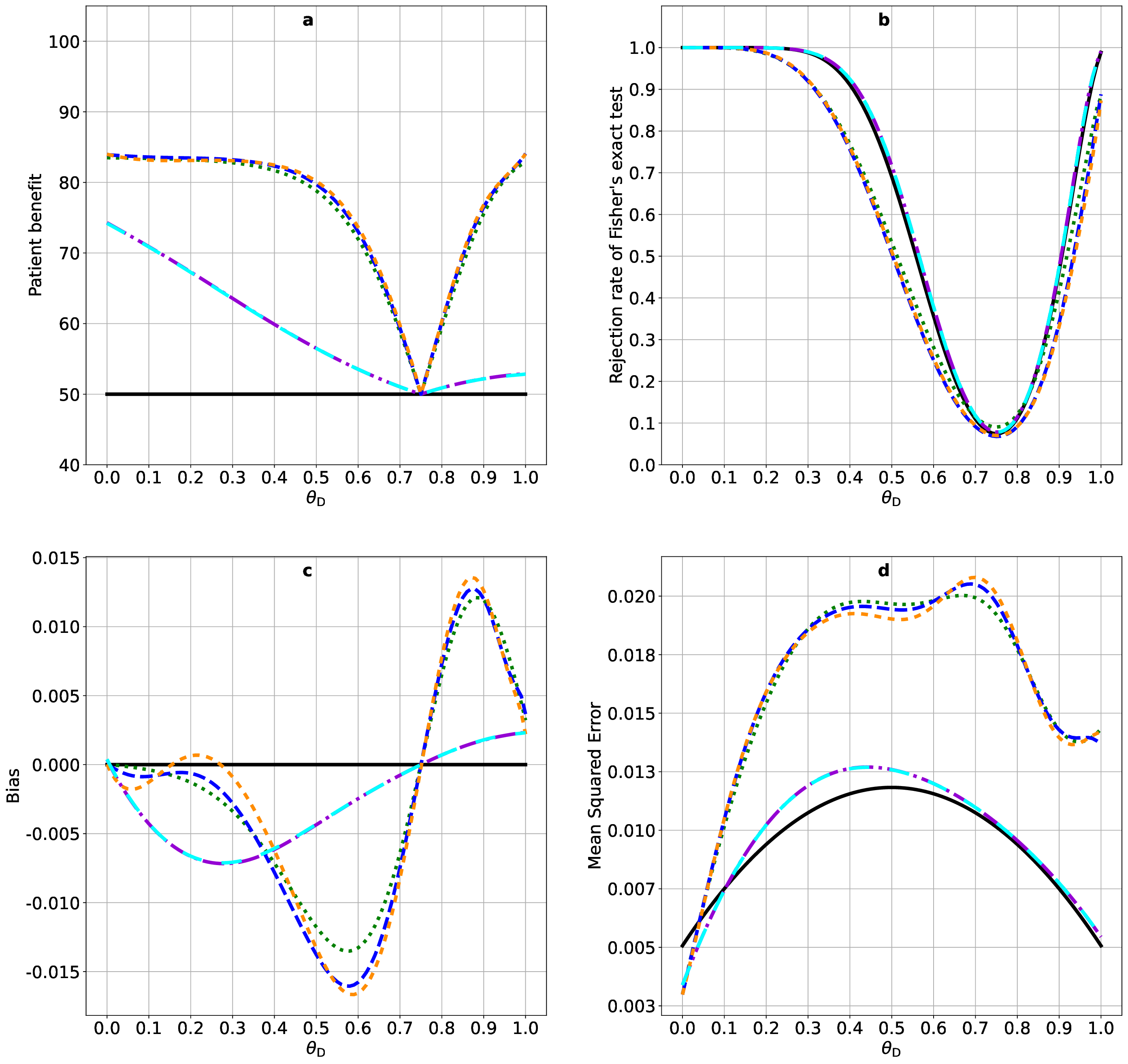}

	\caption{Patient benefit (subfigure~a), rejection rate (subfigure~b),  bias (subfigure~c), and mean squared error (subfigure~d) vs.~$\theta_\D$ for~$\theta_\C=0.75$,~$n=75$, and RA~procedures ER (solid), CMDP-E2 
with~$p=0.95$ (dashed, moderate), CMDP-E2 
with~$p=1.00$ (dashed, short), CRDP (dotted),  \mbox{CMDP-E1} 
with~$p=0.95$ (dash dot), CMDP-E2 
with~$p=1.00$ (dashed, long)}\label{results_N75_75_estimation}
\end{figure}

 \begin{figure}[htb]
		\includegraphics[width=\linewidth]{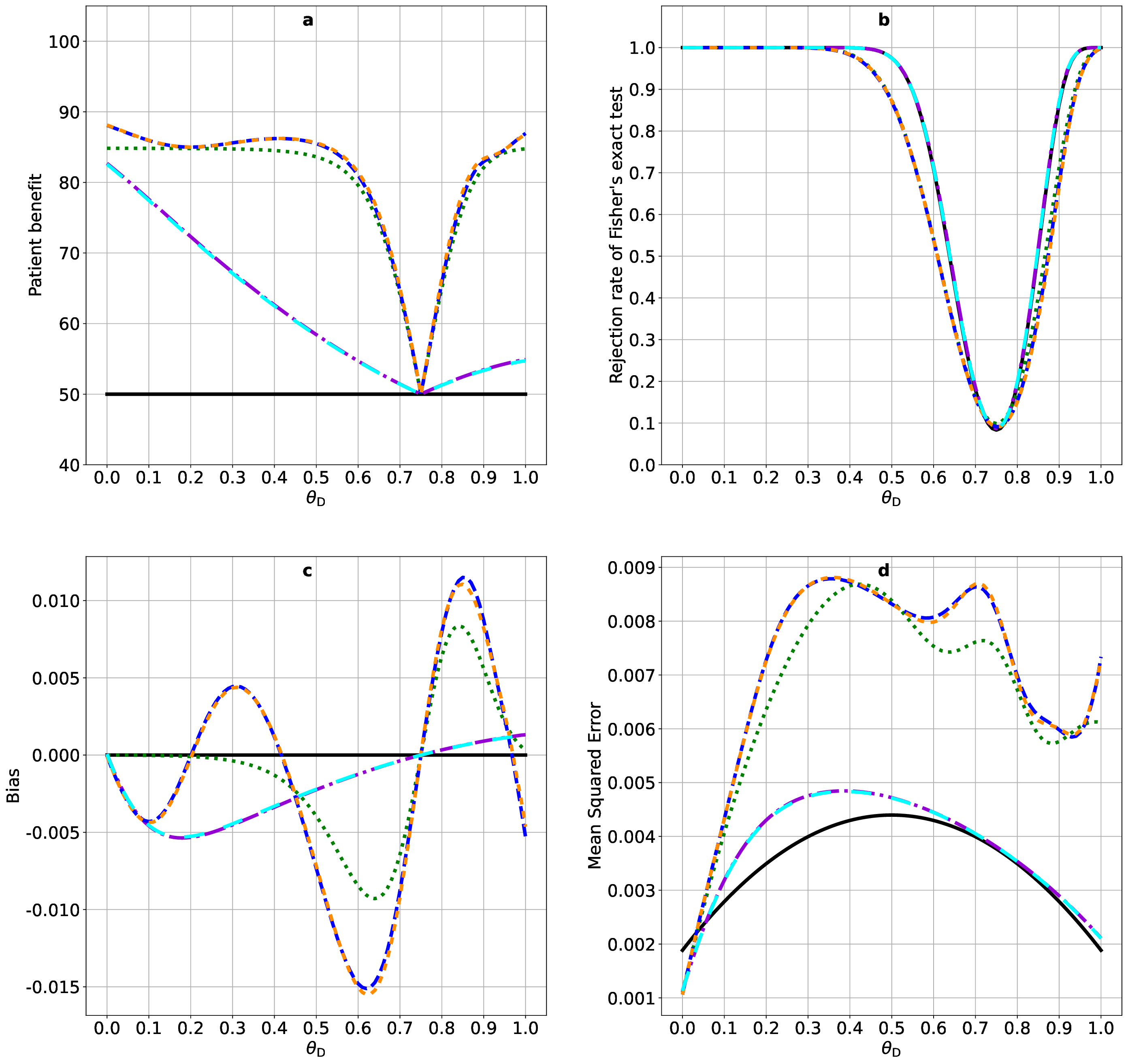}
		
	\caption{Patient benefit (subfigure~a), rejection rate (subfigure~b),  bias (subfigure~c), and mean squared error (subfigure~d) vs.~$\theta_\D$ for~$\theta_\C=0.75$,~$n=200$, and RA~procedures ER (solid), CMDP-E2 
with~$p=0.95$ (dashed, moderate), CMDP-E2 
with~$p=1.00$ (dashed, short), CRDP (dotted),  \mbox{CMDP-E1} 
with~$p=0.95$ (dash dot), CMDP-E2 
with~$p=1.00$ (dashed, long)}\label{results_N200_75_estimation}
\end{figure}

\FloatBarrier

\end{document}